\documentclass[]{article}

\usepackage{hyperref}

\usepackage{graphicx,psfrag, xcolor}
\usepackage{subcaption}
\usepackage{amsmath}
\usepackage{amsthm}
\usepackage{cite}
\usepackage{latexsym}
\usepackage{amssymb}
\usepackage{graphics}
\usepackage{dsfont}
\usepackage{mathrsfs}
\usepackage{bbm}
\usepackage{latexsym}
\usepackage{booktabs,amsfonts,dcolumn}
\usepackage{graphics} 
\usepackage{epsfig} 
\usepackage{amssymb}  %
\usepackage{upgreek}
\usepackage{forest} 
\usepackage{comment}
\usepackage{cancel}
\usepackage{url}
\usepackage{soul}
\usepackage{graphicx,psfrag, color}
\usepackage{latexsym}
\usepackage{bbm}
\usepackage{latexsym}
\usepackage{lettrine}%
\usepackage{dsfont}
\usepackage{array}
\usepackage{amsthm}
\usepackage{multicol}

\usepackage[english]{babel}
\usepackage{graphicx}
\usepackage{pgf, tikz, pgflibraryshapes}
\usepackage{amssymb,bm}
\usepackage{colortbl}
\usepackage{multicol}
\usepackage{stackrel}
\usepackage{pgfplots}
\usepackage{ulem}
\newcommand{\ba}{\begin{array}}
	\newcommand{\ea}{\end{array}}
\usepackage{color}
\usepackage[mathscr]{euscript}

\usepackage{amsfonts}

\newcommand{\integernonnegative}{\mathbb{Z}_{\geq 0}}

\newcommand{\G}{\mathcal{G}} 
\newcommand{\V}{\mathcal{V}} 

\newtheorem{corollary}{Corollary}

\newtheorem{theorem}{Theorem}
\newtheorem{lemma}{Lemma}

\newtheorem{proposition}{Proposition}
\newtheorem{definition}{Definition}
\newtheorem{remark}{Remark}

\theoremstyle{definition}
\newtheorem{example}{Example}

\renewcommand{\eqref}[1]{Eq.~(\ref{#1})}

\definecolor{darkcyan}{rgb}{0.0, 0.55, 0.55}
\definecolor{internationalorange}{rgb}{1.0, 0.31, 0.0}

\usepackage{color}
\usepackage{ulem}

\begin{document}
\title{Ergodic and Metastable Behavior of Stochastic Popularity Dynamics in Social Networks}
\author{Chiara Ravazzi, 
Paolo Frasca, Gaya Cocca
\thanks{
\indent C.~Ravazzi is with the National Research Council of Italy (CNR-IEIIT), c/o Politecnico di Torino, Corso Duca Degli Abruzzi 24, 10129, Torino, Italy. 
			E-mail: chiara.ravazzi@cnr.it.
P.~Frasca is with Univ.\ Grenoble Alpes, CNRS, Inria,  Grenoble INP, GIPSA-lab, F-38000 Grenoble, France. E-mail: paolo.frasca@gipsa-lab.fr.
G.~Cocca was a master's student of Politecnico di Torino, Corso Duca Degli Abruzzi 24, 10129, Torino, Italy. 	E-mail: gaya.coccagc@gmail.com. 
}
\thanks{This work has been supported in part by ANR grant FeedingBias (ANR-22-CE38-0017-01).}}
\date{}

\maketitle

\begin{abstract}
This paper studies popularity dynamics in social networks, where users' engagement with competing channels is driven by intrinsic content quality, social influence, and popularity feedback. We introduce a stochastic model in which users interact through a directed influence network and channel exposure depends jointly on these three mechanisms. We first characterize the resulting dynamics in the absence of intrinsic quality, showing that popularity concentrates asymptotically on a single channel, leading to a winner-takes-all regime. When all channels have positive intrinsic quality, we establish ergodicity of the popularity process and characterize its long-run behavior. We further show that, when intrinsic quality is small, the system exhibits metastable behavior: it can remain for long periods near configurations dominated by a single channel before transitioning to another. These results highlight how weak quality differences can coexist with strong path dependence induced by popularity feedback. We then investigate platform-level interventions and propose controls that compensate for network-induced exposure biases and mitigate excessive concentration. Analytical results are illustrated through numerical simulations and an empirical analysis of YouTube data, whose observed popularity dynamics are consistent with the proposed stochastic framework. The results provide a mathematical framework for understanding the interplay between quality, social influence, and reinforcement in online popularity dynamics.
\end{abstract}



\section{Introduction}
Digital platforms have profoundly transformed the dissemination and consumption of news, information, videos, and cultural products~\cite{10.1080/23808985.2022.2142149}. Content popularity results from the interplay of multiple mechanisms: users are influenced by their social environment \cite{pnas1}, platforms amplify successful content through exposure mechanisms \cite{inproceedings}, and channels produce content that differs in quality or attractiveness \cite{Ciampaglia2018AlgorithmicPopularityBias}. Visibility and engagement are therefore shaped not only by intrinsic content characteristics, but also by feedback effects that can reinforce small initial asymmetries and generate highly uneven outcomes over time \cite{pnas2}.

Popularity dynamics are consequently central to understanding the formation of attention and public opinion. From a methodological perspective, these systems combine interaction, heterogeneity, and randomness, making their long-term behavior difficult to characterize. Empirical studies have documented highly skewed distributions of views, likes, and shares, as well as rich-get-richer effects and winner-take-all phenomena across a variety of platforms \cite{Weng2012CompetitionAmongMemes, LorenzSpreen2019}. Preferential attachment models provide one possible explanation for such phenomena by modeling mechanisms through which already popular items are more likely to attract further attention \cite{Hofstad_2016}. More recently, platform-mediated content exposure has emerged as a central mechanism shaping visibility and engagement \cite{Castaldo2021}, as ranking systems tend to privilege successful content and thereby create endogenous feedback loops \cite{11036812,Rossi2022ClosedLoopOpinionRecommendations,11184854}.

Despite this extensive theoretical literature, several aspects remain insufficiently understood. First, many existing models are either deterministic \cite{11036812} or formulated at an aggregate level \cite{10.1007/978-3-032-13821-7_26}, and therefore do not explicitly account for the randomness of individual engagement. At the microscopic scale, whether a user watches, likes, or interacts with a piece of content is inherently stochastic, even under similar conditions. Such variability may have important dynamical consequences when attention is repeatedly reshaped by social and algorithmic feedback. Second, network influence, popularity reinforcement, and content quality are often studied separately, whereas these mechanisms coexist and interact in practice. A unified framework is therefore needed to disentangle their respective roles. Third, although empirical evidence suggests the coexistence of stable long-term patterns and prolonged episodes of concentration around dominant creators \cite{ext3imp}, the mechanisms through which ergodic behavior and metastable winner-take-all regimes can arise within the same model remain insufficiently understood.

This paper addresses these issues by introducing a stochastic model of popularity dynamics over a social network (Section~\ref{cap:stoc}), featuring users and content providers, which we refer to as channels because of our interest in YouTube as an empirical case study. The model extends the deterministic model of popularity dynamics that we recently introduced in~\cite{11036812} by considering two coupled variables: the attention that each user allocates to each channel and the corresponding individual reaction, such as a view or a like. Attention evolves through three mechanisms: social imitation over the user network, platform-mediated popularity reinforcement, and an exogenous quality term associated with each channel. Individual reactions are then modeled as Bernoulli random variables conditioned on attention. This formulation retains a simple microscopic interpretation while providing sufficient structure to derive analytical results on the long-term behavior of the system.

Our analysis considers two main cases, depending on whether content quality contributes to the dynamics. When quality plays no role, we study monopoly formation and characterize the winning probabilities of competing channels in Section~\ref{sec:no_quality}. We also investigate platform interventions on content exposure, aimed at compensating for structural network advantages and mitigating excessive popularity concentration. Related results on the control of ergodic social dynamics were recently presented in~\cite{11407443}. When quality influences the dynamics, we derive conditions for ergodicity in Section~\ref{sec:all_effects}, and study the small-quality regime and the emergence of metastable behavior in Section~\ref{sec:meta}. We also discuss control strategies for mitigating such metastability.

Numerical experiments are presented in Section~\ref{sec:numerics}, followed by an empirical analysis of YouTube data in Section~\ref{sec:emprical_analysis}. As in related work on opinion dynamics, the purpose of the model is explanatory rather than predictive. We do not aim to reproduce or forecast the exact popularity trajectory of a given channel on a real platform \cite{11627737}, nor to infer platform-specific recommendation rules from data. Rather, we seek to provide a mathematically tractable framework that isolates a few fundamental mechanisms and clarifies how their interaction can generate qualitatively different regimes. The model should therefore be interpreted as a stylized representation of popularity formation, designed to identify mechanisms underlying concentration, persistence, ergodicity, and metastable behavior, some of which are also reflected in the empirical patterns discussed in Section~\ref{sec:emprical_analysis}. Concluding remarks are given in Section~\ref{sec:conclusions}, and all proofs are provided in the Appendix.

\section{Stochastic Modeling of Popularity Dynamics} \label{cap:stoc}
\label{section:model}
We consider a population of $n$ individuals, indexed in the set $\mathcal{V}$, who are users of a social media platform, e.g.\ YouTube or TikTok. At each time $t\in\integernonnegative$,  the users are exposed to different content, proposed by $m$ channels, indexed in set $\mathcal{C} $. The value $X_v^{( i )}(t)\in[0,1]$ represents the level of attention of user $v\in\V$ for content proposed by channel $ i\in\mathcal{C} $ at time $t\in\integernonnegative$. 
We suppose that a user's attention is influenced by the attention given by other users in the social network. The mutual influence is represented using a weighted graph
$\mathcal{G}=(\mathcal{V},\mathcal{E}, P)$, where $\mathcal{E}\subseteq\mathcal{V}\times\mathcal{V}$ describes the influence among the users, and the weighted adjacency matrix $P$ encodes the strength of these connections. The matrix $P$ is supposed to be row stochastic, i.e., with nonnegative entries and $P\mathds{1}_n=\mathds{1}_n$.

We assume that the dynamics of the attention $X_v^{(i)}(t)$ evolves according to the following recursion:
\begin{equation} \label{eq: stoch_model_simpl}
X_v^{(i)}(t+1) = \alpha_v \sum_{w } P_{vw} Y^{(i)}_w(t) + \beta_v \Pi^{(i)}(t) + \gamma_v q^{(i)},
\end{equation}
where
\begin{itemize}
\item $Y_v^{(i)}(t) \sim
\operatorname{Bernoulli}(X_v^{(i)}(t))$
conditionally on $X_v^{(i)}(t)$, representing whether user $v$ engaged with the content proposed by channel $i\in\mathcal{C}$ at time $t\in\integernonnegative$. In particular, $Y_v^{(i)}(t) = 1$ indicates that the user expressed positive engagement (e.g., a like/view), while $Y_v^{(i)}(t) = 0$ corresponds to the absence of engagement.
\item $\Pi^{(i)}(t)$ represents the popularity of channel $i\in\mathcal{C}$ at time $t\in\integernonnegative$ defined as the success of its content normalized against the success of the content of all other channels
\begin{equation} \label{eq:rankingstoch}
\Pi^{(i)}(t) = \frac{\sum_{v \in \mathcal{V}} Y^{(i)}_v(t)}{\sum_{j \in \mathcal{C}} \sum_{w \in \mathcal{V}} Y^{(j)}_w(t)},
\end{equation}
where we adopt the convention that $\Pi^{(i)}(t)={1}/{m  } $ for all $i\in\mathcal{C}$ when $Y^{(i)}_v(t)=0$ for all $v\in\mathcal{V}$ and all $i\in\mathcal{C}$. It should be noticed that $\sum_{i\in\mathcal{C}}\Pi^{(i)}(t)=1$ for all $t\in\integernonnegative$.
\item The parameter $q^{(i)}\in[0,1]$ is an intrinsic feature of channel $i\in\mathcal{C}$, representing the quality of the content proposed by the channel (with the meaning that content of higher quality should receive more attention). 
\item The contributions of social influence, popularity, and quality are weighted through nonnegative coefficients $\alpha_v$, $\beta_v$ and $\gamma_v$, 
such that $
\alpha_v + \beta_v + \gamma_v = 1.
$
\end{itemize}

\subsection{The engagement Markov chain}
The model described in \eqref{eq: stoch_model_simpl} depends on the random variables $\{Y_v^{(i)}(t)\}_{v\in\V,i\in\mathcal{C}}$, which are the only source of randomness. It should be noticed that, given the current state $Y(t)$, the conditional distribution of future
values of $Y$ does not depend on past values, nor on other variables. Therefore, $\{Y(t)\}_{t\in\integernonnegative}$ evolves as a discrete-time Markov chain with $2^{n m  }$ states.
Any state can be represented by a matrix
$
z \in \{0,1\}^{\mathcal{V}\times\mathcal{C}},
$
whose entry $z_{v,i}$ indicates whether user $v$ positively engaged with channel $i$ in that state.

We now define the transition matrix $W$ of the Markov chain. For any two states
$z,z' \in \{0,1\}^{\mathcal{V}\times\mathcal{C}}$, let
$$
W_{z'\leftarrow z} = \mathbb{P}(Y(t+1)=z' \mid Y(t)=z),
\qquad \forall t\in\mathbb{N}_0.
$$
For a specific channel $i\in\mathcal{C}$, we denote by
$$
\mathcal{S}^{(i)}(z') = \mathrm{supp}\left((z')^{(i)}\right)
= \{v\in\mathcal{V} \mid z'_{v,i}=1\}
$$
the subset of users who positively engage with channel $i$ in the state $z'$.
Then the probability $W_{z'\leftarrow z}$ of transitioning from $z$ to $z'$ is
\begin{equation} \label{def:prob}
W_{z'\leftarrow z}
=
\prod_{i\in\mathcal{C}}
\prod_{v\in \mathcal{S}^{(i)}(z')}
\theta_v^{(i)}(z)
\prod_{v\notin \mathcal{S}^{(i)}(z')}
\left(1-\theta_v^{(i)}(z)\right),
\end{equation}
where
\begin{equation}\label{eq:theta}
\theta_v^{(i)}(z)
=
\alpha_v \sum_{w\in\mathcal{V}} P_{vw} z_{w,i}
+ \beta_v \Pi^{(i)}(z)
+ \gamma_v q^{(i)},
\end{equation}
and
$$
\Pi^{(i)}(z)
=\displaystyle
{\sum_{v\in\mathcal{V}} z_{v,i}}\bigg/
{\sum\limits_{j\in\mathcal{C}}\sum\limits_{w\in\mathcal{V}} z_{w,j}},
$$
with the convention that $\Pi^{(i)}(z)=1/m  $ for all $i\in\mathcal{C}$ when
$z_{v,i}=0$ for all $v\in\mathcal{V}$ and all $i\in\mathcal{C}$.

In the following sections, we will study different scenarios, associated with different values of parameters of the model in \eqref{eq: stoch_model_simpl}, and show the corresponding outcomes.

\section{Popularity reinforcement without quality influence}\label{sec:no_quality}
For each channel $i\in \mathcal{C}$, denote 
$$
z^{\star,i}_{v,j}
=
\begin{cases}
1, & j=i,\\
0, & j\neq i,
\end{cases}
\qquad \forall v\in\mathcal V,\ \forall j\in\mathcal C.
$$
Then, we say that the monopoly states are those in 
$$
\mathcal{Z}^{\star}=
\left\{
z^{\star,i}\in\{0,1\}^{\mathcal{V}\times\mathcal{C}}: i\in \mathcal{C}
\right\},
$$
that is, the states corresponding to all attention concentrated on a single channel.

The following theorem 
shows that when users disregard the quality of the content proposed by each channel, popularity concentrates on a single channel.
\begin{theorem}[No quality -- Popularity reinforcement]
\label{thm:no_quality_all_alpha}
Let  $\alpha_v>0,\beta_v > 0$ and $\gamma_v = 0$ for all $v\in\mathcal{V}$. Then, the following facts hold true.
\begin{enumerate}
\item[(i)] 
The Markov chain $Y$ admits absorbing states, and the absorbing states are the monopoly states $\mathcal{Z}^{\star}$.

\item[(ii)]
If $\G$ is strongly connected, then the absorbing state $z^{\star,j}$ is reachable from any initial condition $Y(0)$ satisfying
$
\sum_{v\in\mathcal{V}} Y_v^{(j)}(0) \geq 1.
$

\item[(iii)]
If $\G$ is strongly connected, then the Markov chain converges to one of the absorbing states in $\mathcal{Z}^{\star}$.
\end{enumerate}
\end{theorem}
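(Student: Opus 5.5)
Part (i) is a direct check on the transition probabilities \eqref{def:prob}. With $\gamma_v=0$ we have $\theta_v^{(i)}(z)=\alpha_v\sum_w P_{vw}z_{w,i}+\beta_v\Pi^{(i)}(z)\in[0,1]$.

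A state $z$ is absorbing iff $W_{z\leftarrow z}=1$. That is, iff $\theta_v^{(i)}(z)=z_{v,i}$ for every $v,i$, so every factor in \eqref{def:prob} equals one.

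First, the monopoly states qualify. At $z^{\star,i}$ we have $\Pi^{(i)}=1$, $\Pi^{(j)}=0$ for $j\neq i$, and $\sum_w P_{vw}z_{w,i}=1$ by row stochasticity. Hence $\theta_v^{(i)}=\alpha_v+\beta_v=1$ and $\theta_v^{(j)}=0$.

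Conversely, let $z$ be absorbing. The all-zero state is not absorbing: there $\Pi^{(i)}=1/m$, so $\theta_v^{(i)}\ge\beta_v/m>0$. So some channel $i$ has $\Pi^{(i)}(z)>0$. Then $\theta_v^{(i)}(z)\ge\beta_v\Pi^{(i)}(z)>0$ for all $v$, which forces $z_{v,i}=1$ for all $v$. If a second channel $j$ also had positive popularity, the same argument would give $z_{v,j}=1$ for all $v$ too. But then $\Pi^{(i)}(z)<1$, and the fixed-point condition $1=\theta_v^{(i)}=\alpha_v+\beta_v\Pi^{(i)}$ would fail because $\beta_v>0$. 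Hence exactly one channel is active, and $z=z^{\star,i}$.

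For (ii), I will build an explicit positive-probability path from $z$ with $\sum_v z_{v,j}\ge1$ to $z^{\star,j}$. I will use one-step transitions of the form $z\to z'$, where $z'$ keeps channel $j$ active and kills every other channel. Such a transition has positive probability iff $\theta_v^{(j)}(z)>0$ whenever $z'_{v,j}=1$ and $\theta_v^{(i)}(z)<1$ for all $i\neq j$.

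The second requirement always holds once channel $j$ is active. Indeed, $\Pi^{(j)}(z)>0$ gives $\Pi^{(i)}(z)<1$ for $i\ne j$, so $\theta_v^{(i)}\le\alpha_v+\beta_v\Pi^{(i)}<1$.

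For the first requirement, $\Pi^{(j)}(z)>0$ gives $\theta_v^{(j)}(z)\ge\beta_v\Pi^{(j)}(z)>0$ for \emph{every} $v$. So in a single step we can move to $z'$ with $z'_{v,j}=1$ for all $v$ and all other entries zero, which is exactly $z^{\star,j}$.

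This means strong connectivity is in fact not even needed, because the popularity term spreads attention globally. I would still note where it could be used: if one wanted a proof via the social term alone, one would propagate the active set along edges of $\G$ over at most $n$ steps. If any subtlety arises, for instance some $\theta_v^{(j)}$ vanishing, that propagation is the fallback.

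For (iii), I apply the standard result on absorbing finite Markov chains. By (i) and (ii), every state reaches an absorbing state with positive probability. For the all-zero state this takes one extra step: from it, any configuration with a nonzero entry is reachable, and (ii) applies from there.

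Let $p>0$ be the minimum over the finitely many states of the probability of absorption within two steps. Then $\mathbb P(\text{not absorbed by }2k)\le(1-p)^k\to0$, so the chain is absorbed almost surely into $\mathcal Z^\star$.

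The main obstacle I expect is the converse direction in (i), namely ruling out non-monopoly fixed points, including the conventions for the all-zero state. The argument given above handles it.
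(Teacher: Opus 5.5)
Your proof is correct. For (i) and (iii) it follows the same lines as the paper. First you check directly that $\theta_v^{(i)}(z^{\star,i})=1$ and $\theta_v^{(j)}(z^{\star,i})=0$ for $j\neq i$. Then you deduce almost-sure absorption from the fact that $\mathcal{Z}^\star$ is reachable from every non-absorbing state. Your converse in (i) is more careful than the paper's one-line claim that some entry of a non-monopoly state can change. You characterize absorbing states by $\theta_v^{(i)}(z)=z_{v,i}$, then exclude the all-zero state and states with two active channels. In (iii), your bound $(1-p)^k$ makes explicit the standard absorbing-chain step that the paper only asserts.

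The real difference is in (ii). The paper derives exactly your two estimates, $\mathbb{P}(Y_v^{(j)}(1)=1\mid Y(0))>0$ and $\mathbb{P}(Y_v^{(i)}(1)=0\mid Y(0))\ge\beta_v\Pi^{(j)}(0)>0$ for $i\neq j$. It then propagates engagement along directed paths of $\mathcal{G}$, which uses strong connectivity. As you observe, the Bernoulli draws are conditionally independent, so these estimates already give $W_{z^{\star,j}\leftarrow z}>0$ in a single step. Hence the strong-connectivity hypothesis in (ii) and (iii) is superfluous when $\beta_v>0$ for all $v$. Your route is shorter and proves a stronger statement. The path-propagation argument is what one would need only if some users had $\beta_v=0$, since then attention could reach them only through the social term.

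A minor simplification: the all-zero state also reaches $z^{\star,j}$ in one step, because there $\theta_v^{(i)}=\beta_v/m\in(0,1)$ for every $v$ and $i$. Your two-step detour, and the two-step window in the definition of $p$, can therefore be dropped.
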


Notice that, if instead $\beta_v=0$ $\forall v\in\V$, channels evolve independently: each channel can be chosen or abandoned by all users, and therefore multiple channels can survive simultaneously. On the contrary, if $\beta_v>0$, however, the popularity term $\Pi^{(i)}(t)$ introduces global competition between channels and makes only monopoly states absorbing. %

The following example illustrates the asymptotic behavior.
\begin{example} \label{ex:stoch_gamma0}
We consider a social network with $n=500$ users following $m=5$ channels. 
We sample the parameters $\alpha_v$, $v\in\mathcal{V}$, and the initial conditions $X_v^{(i)}(0)$, $v\in\mathcal{V}$, $i\in\mathcal{C}$, independently from uniform distributions on $[0,1]$. The graph is a random graph with self-loops, edge probability $equal to 0.5$, and uniformly generated edge weights.
We show in Figure~\ref{fig:gamma0} attention of the users (left) and the evolution of popularity (right) as a function of time, iterating the updates for 400 time instants.
\end{example}

\begin{figure}
        \includegraphics[width=0.99\columnwidth]{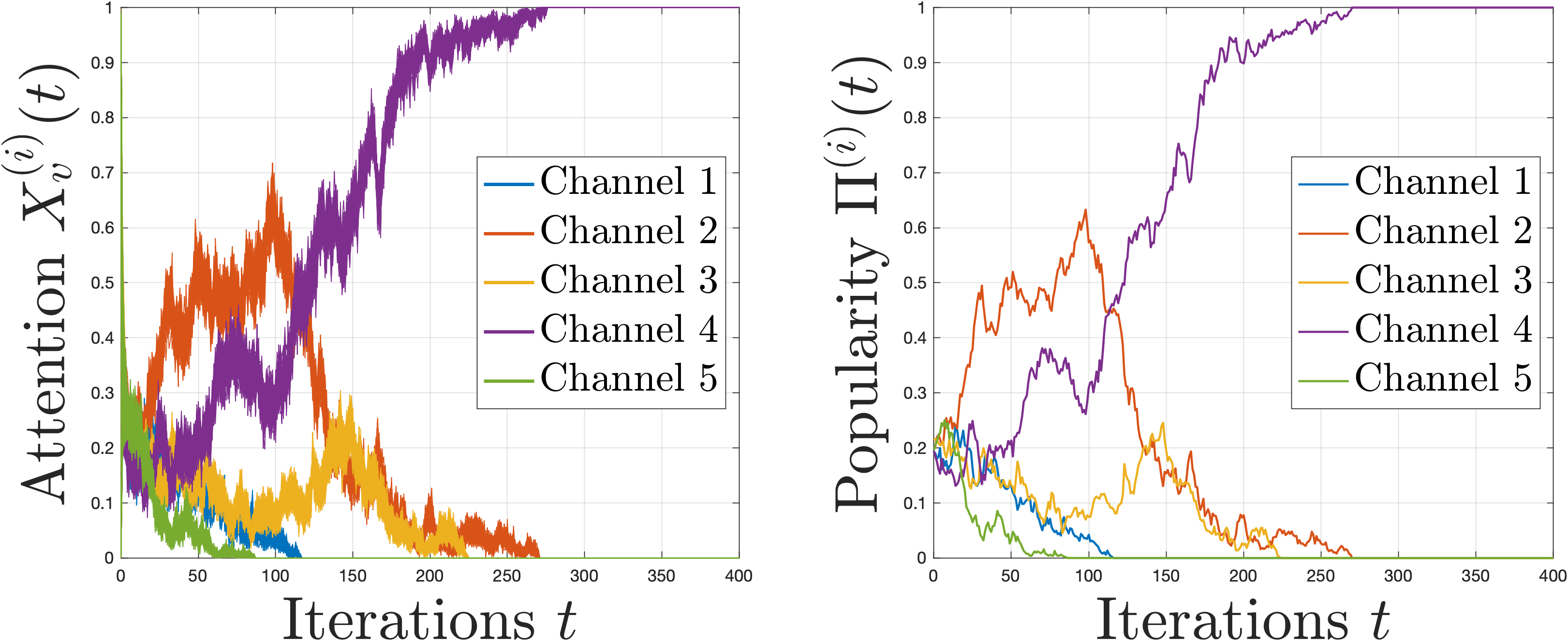}
        \caption{Evolution of the attention of the users (left) and popularity (right) in Example~\ref{ex:stoch_gamma0}.}
        \label{fig:gamma0}
\end{figure}

Theorem~\ref{thm:no_quality_all_alpha} guarantees that, under strong
connectivity, the stochastic process reaches a monopoly
configuration almost surely. However, the theorem does not identify which channel
will gain the support of the network. 

In order to investigate this question, let us now denote the first hitting time of the set of monopoly states by
\begin{equation}
\tau_{\mathcal Z^\star}
:=
\inf\left\{
t\geq 0:
Y(t)\in\mathcal Z^\star
\right\}
\label{eq:absorption_time}
\end{equation}
and define, for every
channel $i\in\mathcal C$, every initial state
$z\in\{0,1\}^{\mathcal V\times\mathcal C}$, and every influence matrix
$P$, the probability that the process is absorbed in the monopoly state associated with channel $i$ starting from initial condition $z$
\begin{equation}
h_i(z;P)
:=
\mathbb P_z
\left(
Y(\tau_{\mathcal Z^\star})=z^{\star,i}
\right).
\label{eq:winner_probability}
\end{equation}
The following result gives a decomposition of this winning probability
into a network contribution and a cumulative platform contribution. For its statement, we need some definitions.

Define $A=\operatorname{diag}(\alpha_v)$ and observe that, under the assumptions of
Theorem~\ref{thm:no_quality_all_alpha}, the nonnegative matrix
$AP$
is substochastic. 
Let
$\lambda
:=
\rho\!\left(AP
\right)$ and notice that $0<\lambda<1$, and let $\varpi\in\mathbb R_{>0}^{\mathcal V}$ be its normalized
left Perron eigenvector, i.e.,
$\varpi^\top
AP
=
\lambda\varpi^\top,
\varpi^\top\mathds{1}=1.$

Furthermore, for every channel $i\in\mathcal C$ and every state
$z\in\{0,1\}^{\mathcal V\times\mathcal C}$, define
$C_i(z):=\varpi^\top z^{(i)}
$, representing the support for channel~$i$ weighted
according to the influence network. It should be noted that, since
$\varpi^\top\mathds{1}=1$, we have
$C_i(z^{\star,j})=\mathbbm{1}_{\{i=j\}}.
$

\begin{theorem}[Decomposition of the winning probability]
\label{thm:winner_decomposition}
Under the assumptions of
Theorem~\ref{thm:no_quality_all_alpha}, for every initial state
$z\in\{0,1\}^{\mathcal V\times\mathcal C}$ and every
$i\in\mathcal C$,
\begin{equation}
\begin{split}\small
&h_i(z;P)
=
C_i(z)
\\
&\quad+
(1-\lambda)
\mathbb E_z
\left[
\sum_{t=0}^{\tau_{\mathcal Z^\star}-1}
\left(
\Pi^{(i)}(Y(t))
-
C_i(Y(t))
\right)
\right].
\end{split}
\label{eq:winner_decomposition}
\end{equation}
\end{theorem}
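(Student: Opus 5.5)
The plan is a martingale/optional-stopping argument built on the one-step conditional drift of the network-weighted support $C_i(Y(t))$.

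\textbf{Step 1 (drift).} Fix $i\in\mathcal C$ and set $B=\operatorname{diag}(\beta_v)$. Since $\gamma_v=0$, \eqref{eq:theta} gives, for every state $z$,
\[
\mathbb E\bigl[Y^{(i)}(t+1)\mid Y(t)=z\bigr]=\theta^{(i)}(z)=AP\,z^{(i)}+\Pi^{(i)}(z)\,B\mathds{1}.
\]
Multiplying on the left by $\varpi^\top$ and using $\varpi^\top AP=\lambda\varpi^\top$ yields
\[
\mathbb E\bigl[C_i(Y(t+1))\mid Y(t)=z\bigr]=\lambda C_i(z)+\Pi^{(i)}(z)\,\varpi^\top\beta,
\]
where $\beta=(\beta_v)_{v\in\mathcal V}$.

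The key identity is $\varpi^\top\beta=1-\lambda$. It holds because $P\mathds 1=\mathds 1$ gives
\[
\varpi^\top\alpha=\varpi^\top AP\mathds 1=\lambda\varpi^\top\mathds 1=\lambda,
\]
and $\beta=\mathds 1-\alpha$. Hence
\[
\mathbb E\bigl[C_i(Y(t+1))-C_i(Y(t))\mid Y(t)=z\bigr]=(1-\lambda)\bigl(\Pi^{(i)}(z)-C_i(z)\bigr).
\]

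\textbf{Step 2 (martingale).} By the Markov property, the process
\[
M_t:=C_i(Y(t))-(1-\lambda)\sum_{s=0}^{t-1}\bigl(\Pi^{(i)}(Y(s))-C_i(Y(s))\bigr)
\]
is a martingale with respect to the natural filtration of $Y$. Its increments are bounded, since $C_i$ and $\Pi^{(i)}$ take values in $[0,1]$.

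\textbf{Step 3 (integrability of $\tau_{\mathcal Z^\star}$).} This is the main obstacle, since optional stopping needs $\mathbb E_z[\tau_{\mathcal Z^\star}]<\infty$ and not only almost-sure absorption from Theorem~\ref{thm:no_quality_all_alpha}(iii).

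Since the state space is finite, it suffices to show that from every state $\mathcal Z^\star$ is reached with positive probability within a bounded number of steps $N$. By the strong-connectivity assumption inherited from Theorem~\ref{thm:no_quality_all_alpha}, two cases cover all states:
\begin{itemize}
\item If $z$ is nonzero, some channel $j$ has support, and Theorem~\ref{thm:no_quality_all_alpha}(ii) gives reachability of $z^{\star,j}$.
\item If $z$ is the all-zero state, then $\Pi^{(j)}=1/m$, so $\theta_v^{(j)}\ge\beta_v/m>0$. The next state is therefore nonzero with positive probability, and the previous case applies.
\end{itemize}
Let $p>0$ be the minimum over all states of the probability of absorption within $N$ steps. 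Then $\mathbb P_z(\tau_{\mathcal Z^\star}>kN)\le(1-p)^k$, which gives a geometric tail and hence finite expectation.

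\textbf{Step 4 (conclusion).} Bounded increments together with $\mathbb E_z\tau_{\mathcal Z^\star}<\infty$ justify $\mathbb E_z[M_{\tau_{\mathcal Z^\star}}]=M_0=C_i(z)$. Since $Y(\tau_{\mathcal Z^\star})\in\mathcal Z^\star$ and $C_i(z^{\star,j})=\mathbbm 1_{\{i=j\}}$, we have $\mathbb E_z[C_i(Y(\tau_{\mathcal Z^\star}))]=h_i(z;P)$. Rearranging gives \eqref{eq:winner_decomposition}. If $z\in\mathcal Z^\star$, then $\tau_{\mathcal Z^\star}=0$ and the sum in \eqref{eq:winner_decomposition} is empty, so the formula holds directly.
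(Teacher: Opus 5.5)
Your proof is correct. The crux is the drift identity $\mathbb E[C_i(Y(t+1))\mid Y(t)=z]=\lambda C_i(z)+(1-\lambda)\Pi^{(i)}(z)$, and you get the needed normalization $\varpi^\top\beta=1-\lambda$ cleanly from $P\mathds 1=\mathds 1$. Bounded increments plus $\mathbb E_z[\tau_{\mathcal Z^\star}]<\infty$ then justify optional stopping, and the identification $\mathbb E_z[C_i(Y(\tau_{\mathcal Z^\star}))]=h_i(z;P)$ is right.

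Note that the paper does not actually contain a proof of this theorem. Its appendix on winning probabilities only proves Proposition~\ref{thm:two_community_network}, where the expected cumulative sum is evaluated with the fundamental matrix $(I-\widehat Q)^{-1}$. That suggests an equivalent linear-algebraic route. The function $h_i$ is $W$-harmonic on transient states, and your Step~1 says $WC_i=\lambda C_i+(1-\lambda)\Pi^{(i)}$. Hence $g:=h_i-C_i$ vanishes on $\mathcal Z^\star$ and satisfies $(I-Q)g=(1-\lambda)(\Pi^{(i)}-C_i)$ on the transient block $Q$. This gives $g=(1-\lambda)\sum_{t\ge0}Q^t(\Pi^{(i)}-C_i)$, which is exactly the expected sum. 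Invertibility of $I-Q$ plays the role of your Step~3. The matrix version directly produces the computable formula used in Proposition~\ref{thm:two_community_network}, while your martingale version makes the meaning of the correction term more transparent.

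Two small remarks. First, Step~3 can be shortened. In every state $z$ some $j$ has $\Pi^{(j)}(z)\ge 1/m$. Then $\theta_v^{(j)}(z)\ge\beta_v\Pi^{(j)}(z)>0$, and $1-\theta_v^{(i)}(z)\ge\beta_v\Pi^{(j)}(z)>0$ for $i\neq j$ since $\Pi^{(i)}(z)\le 1-\Pi^{(j)}(z)$. So $W_{z^{\star,j}\leftarrow z}>0$ in a single step; this covers the all-zero state too and does not need strong connectivity. Second, your argument uses only $\varpi^\top AP=\lambda\varpi^\top$ and $\varpi^\top\mathds 1=1$, not the Perron property of $\lambda$. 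Also, naming $\operatorname{diag}(\beta_v)$ as $B$ clashes with the paper's two-community matrix.
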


Theorem~\ref{thm:winner_decomposition} distinguishes two contributions:
The first one is the initial support measured according to the
influence network; the second one represents the cumulative correction
generated by the platform popularity term during the transient
evolution before the process reaches a monopoly state.
In the next section, we use a two-community setting to illustrate how the same amount of support can lead to different monopoly probabilities depending on where that support is located in the network, and how platform reinforcement can modify or even reverse this structural advantage.

\subsection{Winning probability in a two-community setting}
We now consider a two-community directed influence network, for which both terms in Theorem~\ref{thm:winner_decomposition} can be characterized explicitly. Let $\mathcal V=\mathcal V_1\cup\mathcal V_2$, with $|\mathcal V_1|=n_1$, $|\mathcal V_2|=n_2$, and $n=n_1+n_2$. Nodes in the same community have the same aggregate influence profile. Let
\begin{equation}
B=
\begin{pmatrix}
1-a & a\\
b & 1-b
\end{pmatrix},
\qquad a,b\in(0,1),
\label{eq:core_periphery_B}
\end{equation}
and define $P_{vw}=B_{rs}/n_s$ for $v\in\mathcal V_r$ and $w\in\mathcal V_s$ with $s\in\{1,2\}$. When $a\neq b$, influence between the two communities is asymmetric.
For a state $z$, let $k_{r,i}(z):=\sum_{v\in\mathcal V_r}z_{v,i}$ denote the support for channel $i$ in community $r$, and let $k_i(z):=k_{1,i}(z)+k_{2,i}(z)$.

\begin{proposition}[Two-community directed network]
\label{thm:two_community_network}
Consider the network defined above and homogeneous parameters $\alpha_v=1-\beta$ and $\beta_v=\beta$ for all $v\in\mathcal V$, with $\beta\in(0,1)$. Then:
\begin{enumerate}

\item[(i)] For every state $z$ and channel $i$,
\begin{equation}
C_i(z)=
\frac{b}{a+b}\frac{k_{1,i}(z)}{n_1}
+
\frac{a}{a+b}\frac{k_{2,i}(z)}{n_2}.
\label{eq:core_periphery_network_score}
\end{equation}

\item[(ii)] Define $K_{r,i}(t):=\sum_{v\in\mathcal V_r}Y_v^{(i)}(t)$ and $K(t):=(K_{r,i}(t):r=1,2,\ i\in\mathcal C)$. Then, $K(t)$ is a Markov chain on the aggregate state $k=(k_{r,i})$. Let $k_i:=k_{1,i}+k_{2,i}$, $S(k):=\sum_{j\in\mathcal C}k_j$, and
$$
p_i(k):=
\begin{cases}
k_i/S(k), & S(k)>0,\\
1/m  , & S(k)=0.
\end{cases}
$$
Conditionally on $K(t)=k$,
$
K_{r,i}(t+1)\sim
\operatorname{Binomial}\left(n_r,\theta_{r,i}(k)\right),
$
where
\begin{equation}
\theta_{r,i}(k)
=(1-\beta)\sum_{s=1}^2 B_{rs}\frac{k_{s,i}}{n_s}
+\beta p_i(k).
\label{eq:core_periphery_theta}
\end{equation}

Let $\widehat Q$ be the restriction of the transition matrix of the aggregate chain to its transient states and define

$$
r_i(k):=
p_i(k)
-
\frac{b}{a+b}\frac{k_{1,i}}{n_1}
-
\frac{a}{a+b}\frac{k_{2,i}}{n_2}.
$$

Then, with $k(z):=(k_{r,i}(z):r=1,2,\ i\in\mathcal C)$,
\begin{equation}
h_i(z;P)
=C_i(z)
+
\beta
\left[(I-\widehat Q)^{-1}r_i\right](k(z))
\label{eq:core_periphery_closed_hitting}
\end{equation}

\end{enumerate}
\end{proposition}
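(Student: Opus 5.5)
Part (i) is a direct computation of the left Perron eigenvector of $AP=(1-\beta)P$. Since $A=(1-\beta)I$, we have $\lambda=(1-\beta)\rho(P)=1-\beta$, because $P$ is row stochastic and, with $a,b\in(0,1)$ and the block structure, irreducible. It therefore suffices to find the stationary distribution $\varpi$ of $P$. Try $\varpi_v=\pi_r/n_r$ for $v\in\mathcal V_r$. Then
\[
(\varpi^\top P)_w=\sum_r n_r\frac{\pi_r}{n_r}\frac{B_{rs}}{n_s}=\frac{(\pi^\top B)_s}{n_s},
\]
so $\pi$ must be the stationary vector of $B$, namely $\pi=(b,a)/(a+b)$. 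Normalization holds because $\sum_r n_r\pi_r/n_r=1$. Hence $C_i(z)=\sum_r (\pi_r/n_r)k_{r,i}(z)$, which is \eqref{eq:core_periphery_network_score}.

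For the first claim of (ii), I would substitute the block form of $P$ into \eqref{eq:theta} with $\gamma_v=0$. For $v\in\mathcal V_r$,
\[
\sum_w P_{vw}z_{w,i}=\sum_s B_{rs}\,\frac{k_{s,i}(z)}{n_s},
\]
and $\Pi^{(i)}(z)=p_i(k(z))$. So $\theta_v^{(i)}(z)$ depends on $z$ only through $k(z)$ and on $v$ only through its community. By the product form \eqref{def:prob}, the $n_r$ users in $\mathcal V_r$ are conditionally independent Bernoulli$(\theta_{r,i}(k))$ for each channel. Their sum is therefore Binomial$(n_r,\theta_{r,i}(k))$, independently across $(r,i)$. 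Because the conditional law of $K(t+1)$ given $Y(t)$ depends only on $K(t)$, $K$ is a Markov chain (a lumpable function of $Y$). The absorbing states of $K$ are the images of $\mathcal Z^\star$, and $h_i(z;P)$ equals the absorption probability of $K$ from $k(z)$.

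For \eqref{eq:core_periphery_closed_hitting}, I would not rely only on Theorem~\ref{thm:winner_decomposition}. Instead I would prove directly that $M_t:=C_i(Y(t))$ satisfies a one-step drift identity. Write $\mathbb E[Y^{(i)}(t+1)\mid Y(t)=z]=APz^{(i)}+\beta\Pi^{(i)}(z)\mathds 1$. Multiplying by $\varpi^\top$ gives
\[
\mathbb E[C_i(Y(t+1))\mid Y(t)=z]=(1-\beta)C_i(z)+\beta\Pi^{(i)}(z)=C_i(z)+\beta\, r_i(k(z)),
\]
using $\lambda=1-\beta$ and $\varpi^\top\mathds 1=1$. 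In aggregate terms, with $c_i(k):=\sum_r \pi_r k_{r,i}/n_r$, this reads $(Qc_i)(k)=c_i(k)+\beta r_i(k)$ for transient $k$, where $Q$ is the full aggregate transition matrix. Let $g(k)$ be the absorption probability at channel $i$ and $\hat u:=\beta(I-\widehat Q)^{-1}r_i$. Then $c_i+\hat u$ satisfies $f=Qf$ on transient states, since $\widehat Q\hat u=\hat u-\beta r_i$ and $\hat u$ vanishes on absorbing states. It also satisfies the boundary condition $f(k^{\star,j})=\mathbbm 1_{\{i=j\}}$, because $c_i(k^{\star,j})=\mathbbm 1_{\{i=j\}}$. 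The harmonic function $g$ satisfies the same system. Uniqueness of solutions of this Dirichlet problem, which needs $I-\widehat Q$ invertible, gives $g=c_i+\hat u$, i.e.\ \eqref{eq:core_periphery_closed_hitting}. This is also Theorem~\ref{thm:winner_decomposition} with $(1-\lambda)=\beta$ and the expected sum written as a Green's function.

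The main obstacle is justifying invertibility of $I-\widehat Q$, i.e.\ $\rho(\widehat Q)<1$. This requires every transient aggregate state to reach absorption with positive probability. Two cases arise. The first is the all-zero state $S=0$: there $p_i=1/m$ and $\theta_{r,i}=\beta/m>0$, so the chain leaves it. The second is any state with some positive support: Theorem~\ref{thm:no_quality_all_alpha}(ii)--(iii) applies since $\G$ is strongly connected (all $B_{rs}>0$), and projects to $K$. Absorption is then almost sure from every state, so $\widehat Q^t\to0$, $(I-\widehat Q)^{-1}=\sum_t\widehat Q^t$ exists, and all sums are finite.
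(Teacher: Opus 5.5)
Your proof is correct. Part (i) and the lumpability/Binomial claim in (ii) match the paper's argument; the difference is in the hitting-probability formula.

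The paper specializes Theorem~\ref{thm:winner_decomposition} using $1-\lambda=\beta$. It observes that $\Pi^{(i)}(Y(t))-C_i(Y(t))=r_i(K(t))$ and then quotes the fundamental-matrix identity $\mathbb E_k[\sum_{t<\tau_{\mathcal Z^\star}}r_i(K(t))]=[(I-\widehat Q)^{-1}r_i](k)$. You instead derive the one-step drift $\mathbb E[C_i(Y(t+1))\mid Y(t)=z]=C_i(z)+\beta\, r_i(k(z))$. You then conclude by uniqueness of the Dirichlet problem solved by $c_i+\beta(I-\widehat Q)^{-1}r_i$.

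The paper's route is shorter and presents the proposition as a direct instance of the general decomposition. Yours is self-contained, which matters here because the appendix gives no proof of Theorem~\ref{thm:winner_decomposition}. Your drift computation is exactly that missing argument. It also extends to heterogeneous $\alpha_v,\beta_v$, since $\varpi^\top\operatorname{diag}(\beta_v)\mathds 1=1-\varpi^\top AP\mathds 1=1-\lambda$.

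You also justify the invertibility of $I-\widehat Q$, which the paper leaves implicit. That step can be done without Theorem~\ref{thm:no_quality_all_alpha}. For any state $z$ and channel $j$, each factor of $W_{z^{\star,j}\leftarrow z}$ is at least $\beta\,\Pi^{(j)}(z)$, and some $j$ has $\Pi^{(j)}(z)\ge 1/m$. So every transient aggregate state is absorbed in one step with probability at least $(\beta/m)^{nm}$, which gives $\rho(\widehat Q)\le 1-(\beta/m)^{nm}<1$.
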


Proposition~\ref{thm:two_community_network} makes explicit both components of Theorem~\ref{thm:winner_decomposition}. The first depends on the distribution of initial support across the two communities and on the network asymmetry through $bn_2-an_1$. The second is the cumulative platform correction and can be computed through the fundamental matrix of the reduced community-level Markov chain.

\subsection{Platform control} 
A platform may have an interest in actively shaping content exposure in order to mitigate undesirable concentration effects and preserve a sufficiently diverse semantic ecosystem. A natural control action consists in modifying the platform exposure term by introducing a control input $u(t)\in\mathbb{R}^{\mathcal C}$, namely
$
\Pi_u^{(i)}(t)=\Pi^{(i)}(t)+u_i(t),
 i\in\mathcal C,
$
with
$
\sum_{i\in\mathcal C}u_i(t)=0
$
and $\Pi_u^{(i)}(t)\in[0,1]$ for every $i\in\mathcal C$.
The controlled attention dynamics then become
\begin{center}$
X_v^{(i)}(t+1)
=
\alpha_v\sum_{w\in\mathcal V}P_{vw}Y_w^{(i)}(t)
+
\beta_v \Pi_u^{(i)}(t)
+
\gamma_v q^{(i)}.
$\end{center}
This formulation allows the platform to redistribute exposure across channels while preserving the total amount of exposure
and the control input can be designed according to different objectives, such as compensating for structural advantages induced by the influence network or mitigating excessive popularity concentration.
Specific examples of control laws are introduced below in the two-community setting, for which \eqref{eq:core_periphery_theta} implies that controlled aggregate dynamics is
$$
\theta_{r,i}^{u}(k)
=
(1-\beta)\sum_{s=1}^2 B_{rs}\frac{k_{s,i}}{n_s}
+
\beta p_i^{u}(k).
$$
\begin{example}[Network-bias compensation control]
For network-bias compensation, one may choose
\begin{center}$
u_i(k)=-\eta\,\Delta_i(k),
\qquad
\Delta_i(k):=
\frac{bn_2-an_1}{n(a+b)}
\left(
\frac{k_{1,i}}{n_1}
-
\frac{k_{2,i}}{n_2}
\right),
$\end{center}
so as to reduce the advantage of channels favored by the network structure.
The resulting exposure vector is projected onto the probability simplex
to preserve nonnegativity and unit total mass.

For a numerical example, we consider a two-community network with $n=500$ users and $m=5$ channels. 
The model parameters are fixed to
$
\alpha=0.6,$ $\beta=0.4,$ 
while the inter-community interaction matrix is defined by
$
a=0.08,$ $b=0.42.$
We define the ratio
$
\rho={n_1}/{n}
$
and we let it vary in the interval $[0.2,0.9]$, while the control gain $\eta$
ranges in $[0,1]$. For each pair $(\rho,\eta)$, the hitting
probabilities are estimated over $2000$ Monte Carlo realizations, with
a maximum simulation horizon of $5000$ time steps. The initial support
profiles are kept fixed in relative terms across the two communities, so
that Channels~1 and~2 are initially favored by opposite communities.

Figure~\ref{fig:hitting_control} illustrates how the control gain $\eta$ affects the relative hitting probabilities as the ratio
$
\rho={n_1}/{n}
$
varies. The left panel reports the difference $h_1-h_2$
and the solid black contour identifies the locus $h_1= h_2$. 
As $\rho$ increases, progressively stronger control is required to compensate for the structural advantage induced by the imbalance in community sizes.
This behavior is summarized in the right panel, which reports the minimum control gain $\eta^\star(\rho)$ required to reverse the ranking. For relatively balanced communities, the required intervention is small, while $\eta^\star(\rho)$ increases rapidly as the first community becomes larger. Beyond a critical range of $\rho$, no ranking reversal can be achieved within the admissible control range $\eta\leq 1$, showing that the effect of network structure eventually dominates the available platform intervention. The dashed vertical line marks the structural reference value
$
\rho^\star={b}/({a+b}),
$
which provides a natural benchmark for the imbalance induced by the underlying transition rates.
\begin{figure}
        \includegraphics[width=0.99\columnwidth]{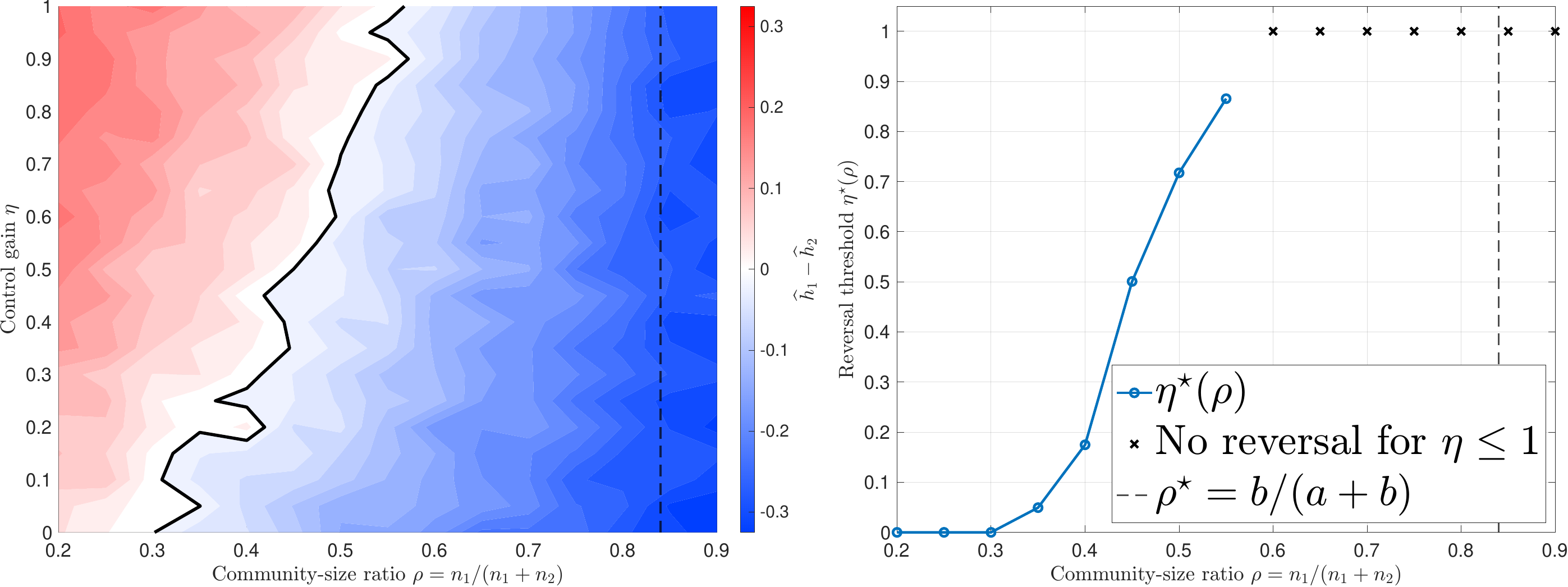}
        \caption{Network bias compensation control: Hitting-probability difference (left) and minimum control for ranking reversal (right)}
        \label{fig:hitting_control}
\end{figure}
\end{example}

\begin{example}[Anti-concentration control] The platform may use an anti-concentration control that decreases the exposure of channels whose popularity exceeds a reference level and redistributes it among the others, for instance
\begin{center}$
u_i(k)=-\eta\left(p_i(k)-\frac{1}{m  }\right).
$\end{center}
\begin{figure}
        \includegraphics[width=0.99\columnwidth]{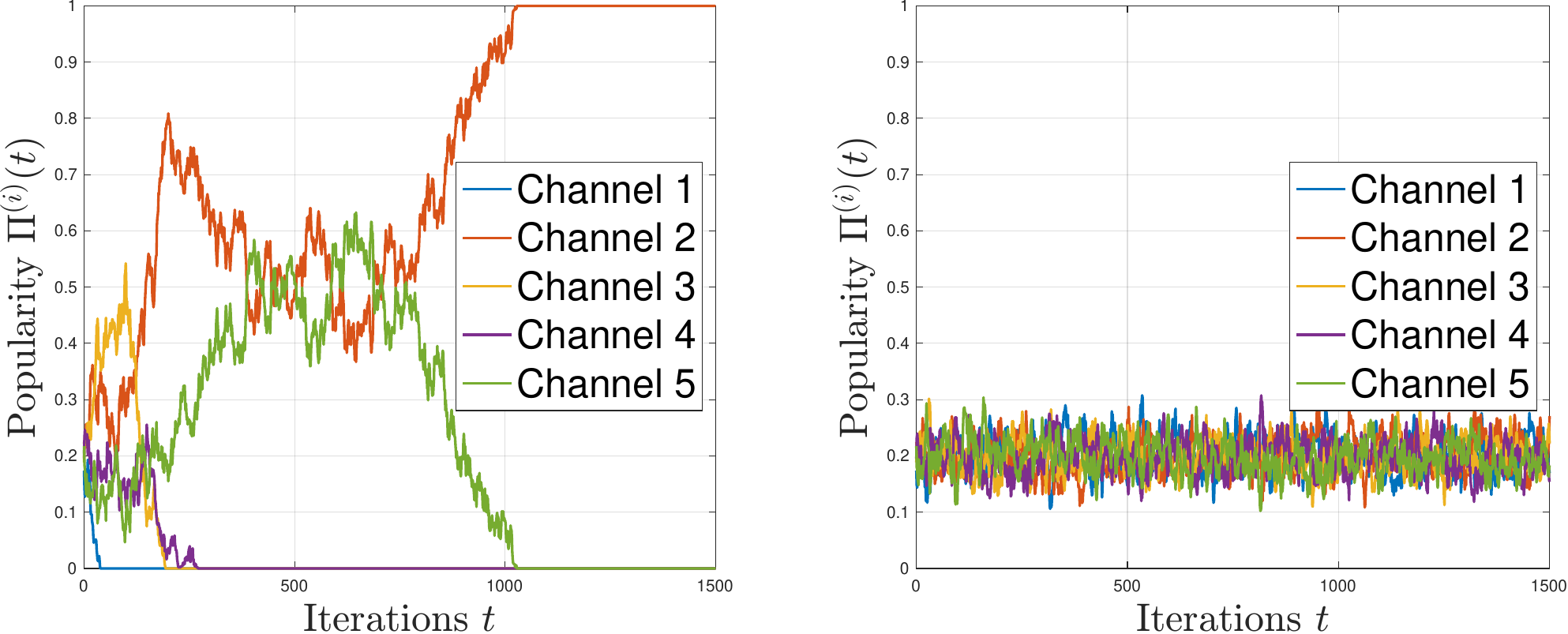}
\caption{Anti-concentration control: comparison of uncontrolled (left) and controlled (right) popularity dynamics.}
        \label{fig:anticoncentration_control}
\end{figure}
Figure \ref{fig:anticoncentration_control} compares the uncontrolled and controlled popularity dynamics for
$n=500$ users and $m=5$ channels over $T=1500$ iterations, with
$\alpha=0.10$, $\beta=0.90$, $\gamma=0$, and anti-concentration gain
$\eta=0.15$. The interaction network is generated with edge probability
$p_{\mathrm{edge}}=0.08$, while the initial engagement probability is
$\theta_0=0.05$. Without control, one channel eventually dominates, producing
a winner-takes-all outcome. In contrast, the anti-concentration control keeps
the five channel popularities fluctuating around the balanced value
$1/m=0.2$.
\end{example}

\section{Ergodic behavior with quality influence}\label{sec:all_effects}
We now focus on the general dynamics, in which all terms appear, i.e. $\gamma_v>0$ for all $v\in\mathcal{V}$.

\begin{theorem}[Ergodicity]\label{thm:ergodicity}
Let $\gamma_v>0$ for all $v\in\mathcal{V}$ and
$q^{(i)}\in(0,1)$ for all $i\in\mathcal C$. Consider the Markov chain
$\{Y(t)\}_{t\in\integernonnegative}$ with transition matrix~$W$
defined in~\eqref{def:prob}. Then, the following facts hold true.
\begin{enumerate}
\item $Y$ admits a unique invariant distribution $\nu$ satisfying $\nu^{\top}W=\nu;$
\item $Y$ converges in distribution to a random variable $Y_{\infty}\sim\nu$; 
\item $Y$ is ergodic in the time-average sense, i.e., 
almost surely
\begin{equation}\label{eq:ergodicity}
\mathbb{E}_\nu[f(Y_{\infty})] = \lim_{T \to \infty} \frac{1}{T} \sum_{t=0}^{T-1} f(Y(t))
\end{equation}
for any function $ f $ that is measurable and integrable with respect to the invariant measure $ \nu $.
\end{enumerate}\end{theorem}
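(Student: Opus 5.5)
Since $Y$ is a Markov chain on the finite state space $\{0,1\}^{\mathcal V\times\mathcal C}$, all three claims follow from standard finite-state Markov chain theory once we show that the transition matrix $W$ in \eqref{def:prob} has all entries strictly positive. A chain with $W_{z'\leftarrow z}>0$ for all $z,z'$ is irreducible and aperiodic, since every state has a self-loop of positive probability. The plan is therefore to reduce everything to a uniform bound $0<\theta_v^{(i)}(z)<1$ for all $v\in\mathcal V$, $i\in\mathcal C$, and $z$. Given such a bound, each factor in the product \eqref{def:prob} is strictly positive, whether it is $\theta_v^{(i)}(z)$ or $1-\theta_v^{(i)}(z)$, and so $W_{z'\leftarrow z}>0$.

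\emph{Lower bound.} From \eqref{eq:theta}, all three terms are nonnegative, because $P$ is nonnegative, $z\in\{0,1\}$, $\Pi^{(i)}(z)\ge 0$, and the coefficients are nonnegative. The quality term gives $\theta_v^{(i)}(z)\ge \gamma_v q^{(i)}>0$, since $\gamma_v>0$ and $q^{(i)}>0$.

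\emph{Upper bound.} We use $\sum_w P_{vw}z_{w,i}\le 1$ (row stochasticity) and $\Pi^{(i)}(z)\le 1$, which holds also under the convention $\Pi^{(i)}=1/m$. These give
\[
\theta_v^{(i)}(z)\le \alpha_v+\beta_v+\gamma_v q^{(i)} = 1-\gamma_v\bigl(1-q^{(i)}\bigr)<1,
\]
using $\alpha_v+\beta_v+\gamma_v=1$, $\gamma_v>0$ and $q^{(i)}<1$. This is the only point that requires care, and it is exactly where the hypothesis $q^{(i)}\in(0,1)$ (open interval) is used. Note that this bound is uniform over the finitely many states, so $\min_{z,z'}W_{z'\leftarrow z}>0$.

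With $W$ a positive stochastic matrix, the Perron--Frobenius theorem gives a unique invariant distribution $\nu$ with $\nu^\top W=\nu$, with all entries positive. This proves item 1.

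Because $W$ is primitive (indeed positive), $W^t$ converges to $\mathds{1}\nu^\top$, in the orientation matching the column/row convention used for $W$. Hence the law of $Y(t)$ converges to $\nu$ for every initial distribution. One can even give a geometric rate by a Doeblin minorization: with $\varepsilon=2^{nm}\min_{z,z'}W_{z'\leftarrow z}$, the total variation distance to $\nu$ decays as $(1-\varepsilon)^t$. This proves item 2.

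Finally, for an irreducible chain on a finite state space, the ergodic theorem for Markov chains (strong law of large numbers via regeneration at returns to a fixed state, which have finite mean return times $1/\nu(z)$) yields \eqref{eq:ergodicity} almost surely, for any initial condition. On a finite space every function is bounded and hence $\nu$-integrable, so the statement holds for all $f$. This proves item 3.

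I expect no real obstacle beyond verifying the strict upper bound on $\theta$; everything else is a direct citation of standard results.
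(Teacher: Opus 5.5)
Your proposal is correct and follows the same route as the paper, which likewise shows $0<\gamma_v q^{(i)}\le\theta_v^{(i)}(z)\le\alpha_v+\beta_v+\gamma_v q^{(i)}<1$, deduces $W_{z'\leftarrow z}>0$ for all $z,z'$, and then cites standard finite-state results for irreducible aperiodic chains. Your Doeblin rate $(1-\varepsilon)^t$ is a valid addition that the paper does not include here; it plays a role similar to the paper's later coupling-based mixing-time bound.
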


\begin{corollary}\label{corol:X}
If $\gamma_v>0$ for all $v\in\mathcal{V}$ and $q^{(i)}\in(0,1)$ for all $i \in \mathcal{C}$, then
\begin{enumerate}
\item The processes $\{X(t)\}_{t\in\integernonnegative},\{\Pi(t)\}_{t\in\integernonnegative}$ respectively converge in distribution to random variables $X_{\infty}$, $\Pi_\infty$. 
\item The processes $\{X(t)\}_{t\in\integernonnegative},\{\Pi(t)\}_{t\in\integernonnegative}$ are ergodic.
\end{enumerate}
\end{corollary}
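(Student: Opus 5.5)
The corollary follows from Theorem~\ref{thm:ergodicity} by viewing $X(t+1)$ and $\Pi(t)$ as deterministic functions of the engagement chain. By \eqref{eq: stoch_model_simpl} and \eqref{eq:rankingstoch}, there are maps $g:\{0,1\}^{\mathcal V\times\mathcal C}\to[0,1]^{\mathcal V\times\mathcal C}$ and $\pi:\{0,1\}^{\mathcal V\times\mathcal C}\to\Delta_m$ (the probability simplex) with
\[
X(t+1)=g(Y(t)),\qquad \Pi(t)=\pi(Y(t)).
\]
Here $g_{v,i}(z)=\theta_v^{(i)}(z)$ as in \eqref{eq:theta}, and $\pi_i(z)=\Pi^{(i)}(z)$, including the convention for the all-zero state. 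Because the state space of $Y$ is finite, both maps are trivially measurable and bounded. I would set $X_\infty:=g(Y_\infty)$ and $\Pi_\infty:=\pi(Y_\infty)$, with $Y_\infty\sim\nu$ the invariant distribution of Theorem~\ref{thm:ergodicity}.

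\textbf{Convergence in distribution.} Theorem~\ref{thm:ergodicity}, item 2, gives $Y(t)\to Y_\infty$ in distribution. On a finite set this means $\mathbb P(Y(t)=z)\to\nu(z)$ for every $z$. For any bounded continuous $\varphi$,
\[
\mathbb E[\varphi(g(Y(t)))]=\sum_z \varphi(g(z))\,\mathbb P(Y(t)=z)\to\sum_z\varphi(g(z))\,\nu(z),
\]
which is the continuous mapping theorem in its simplest form. Hence $\Pi(t)\to\Pi_\infty$ in distribution. For $X$, the index shift is harmless: $X(t)=g(Y(t-1))$ for $t\ge1$, and $Y(t-1)\to Y_\infty$ as well, so $X(t)\to X_\infty$ in distribution. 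The initial value $X(0)$ is irrelevant to the limit.

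\textbf{Ergodicity.} Take any $f$ integrable with respect to the law of $X_\infty$, respectively $\Pi_\infty$. Then $f\circ g$ (resp.\ $f\circ\pi$) is a real function on the finite state space and hence $\nu$-integrable. Applying \eqref{eq:ergodicity} to $f\circ\pi$ gives, almost surely,
\[
\frac1T\sum_{t=0}^{T-1}f(\Pi(t))\to\mathbb E_\nu[f(\pi(Y_\infty))]=\mathbb E[f(\Pi_\infty)].
\]
For $X$, write
\[
\frac1T\sum_{t=0}^{T-1}f(X(t))=\frac{f(X(0))}{T}+\frac{T-1}{T}\cdot\frac{1}{T-1}\sum_{s=0}^{T-2}f(g(Y(s))).
\]
The first term vanishes as $T\to\infty$, and the second converges almost surely to $\mathbb E[f(X_\infty)]$ by the same theorem.

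The only points needing care are the one-step lag between $X$ and $Y$ and the convention for $\Pi$ at the all-zero state. The convention simply makes $\pi$ well defined everywhere on the finite state space, so nothing more is needed there. If ``ergodic'' is meant more strongly, namely that $X$ is itself a stationary ergodic process, I would note that the stationary version of $Y$ is ergodic (irreducibility is part of Theorem~\ref{thm:ergodicity}) and that factor maps of ergodic processes are ergodic. That argument is standard, and the main effort is only in stating it precisely.
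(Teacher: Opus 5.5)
Your proposal is correct and takes the same approach as the paper. Both write $X$ and $\Pi$ as deterministic functions of the finite-state chain $Y$, use the continuous mapping theorem for convergence in distribution, and apply the time-average ergodic theorem of Theorem~\ref{thm:ergodicity} to the composed functions; you are slightly more careful than the paper, which writes $X(t)=f_X(Y(t))$ and applies \eqref{eq:ergodicity} directly with $f=f_X$, whereas you handle the one-step lag $X(t+1)=g(Y(t))$ and compose an arbitrary test function with $g$.
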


\begin{example}\label{ex3}
We consider a social network with $500$ users following five channels.
The user-interaction graph is generated according to the same construction
as in Example~\ref{ex:stoch_gamma0}. For each user $v\in\mathcal{V}$, the parameters
$\alpha_v$, $\beta_v$, and $\gamma_v$ are sampled independently
from a uniform distribution on $[0,1]$ and subsequently normalized so that
$
\alpha_v+\beta_v+\gamma_v=1.
$
The channel quality factors are fixed to five equally spaced values in the interval $[0.1,0.9]$ and the initial conditions $X_v^{(i)}(0)$,
$v\in\mathcal{V}$, $i\in\mathcal{C}$,
are sampled independently from a uniform distribution on $[0,1]$. We show in Figure~\ref{fig:all_terms_new} the resulting evolution of the popularity and its time average.
\end{example}
\begin{figure}
        \includegraphics[width=0.99\columnwidth]{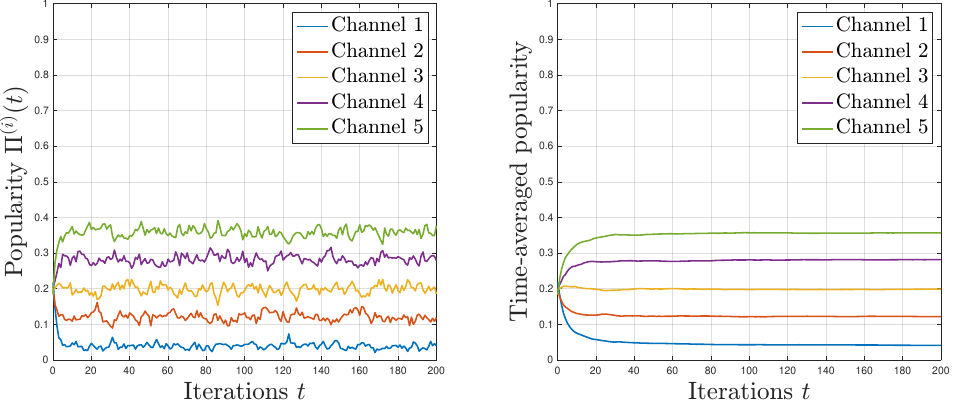}
        \caption{Evolution of popularity (left)  and its time averages (right) in Example~\ref{ex3}.}
        \label{fig:all_terms_new}
\end{figure}

Let $\nu$ be the invariant measure of $W$. Then, for any $\varepsilon\in(0,1)$, the mixing time is defined as follows
$$
\tau_{\mathrm{mix}}(\epsilon)
:= \min\left\{ t\ge0:
\max_{x\in\{0,1\}^{\mathcal{V}\times \mathcal{C}}} \| W^t(x,\cdot)-\nu\|_{\mathrm{TV}} \leq \varepsilon
\right\}.
$$

\begin{proposition}[Mixing-time]\label{prop:mix_time}
The mixing time of the Markov chain $Y$ is such that 
\begin{equation}\label{eq:mix_time}\tau_{\mathrm{mix}}(\varepsilon)
\leq\left\lceil \frac{1}{(\gamma_{\min})^{n m  }}\log\frac{1}{\varepsilon}\right\rceil.
\end{equation}
\end{proposition}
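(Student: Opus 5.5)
We will prove the bound with a Doeblin (minorization) argument: a uniform lower bound on one-step transition probabilities into a single target state gives geometric contraction in total variation.

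\emph{Lower bound on the engagement probabilities.} Let $\gamma_{\min}:=\min_{v}\gamma_v$ and, as in Theorem~\ref{thm:ergodicity}, let $q^{(i)}\in(0,1)$. From \eqref{eq:theta}, all terms in $\theta_v^{(i)}(z)$ are nonnegative. Hence for every state $z$,
\[
\theta_v^{(i)}(z)\ge \gamma_v q^{(i)}.
\]
The displayed bound is stated with $\gamma_{\min}$ alone, so we pick the target state in which the missing factor $q^{(i)}$ causes no trouble. There are two options.

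\begin{itemize}
\item Target the all-ones state $\mathbf 1$. By \eqref{def:prob},
\[
W_{\mathbf 1\leftarrow z}=\prod_{i,v}\theta_v^{(i)}(z)\ge \prod_{i,v}\gamma_v q^{(i)} .
\]
\item Target the all-zeros state $\mathbf 0$. This uses $1-\theta_v^{(i)}(z)\ge \gamma_v(1-q^{(i)})$, which follows from $\alpha_v\sum_w P_{vw}z_{w,i}+\beta_v\Pi^{(i)}\le\alpha_v+\beta_v=1-\gamma_v$.
\end{itemize}

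Either way we get $\min_z W_{z^\circ\leftarrow z}\ge \delta$ with $\delta=\prod_{v,i}\gamma_v c_i\ge(\gamma_{\min}c)^{nm}$, where $c$ collects the quality factors. To match \eqref{eq:mix_time} literally, we read $\gamma_{\min}$ as the minimum over $v,i$ of the effective lower bounds $\gamma_v q^{(i)}$ and $\gamma_v(1-q^{(i)})$. Under that reading, every entry of column $z^\circ$ satisfies
\[
W_{z^\circ\leftarrow z}\ge\delta:=(\gamma_{\min})^{nm}.
\]
Settling this definitional point is the one delicate step. It is a bookkeeping choice, not a mathematical obstacle.

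\emph{Doeblin contraction.} The minorization $W(z,\cdot)\ge\delta\,\mathbb 1_{\{z^\circ\}}(\cdot)$ holds for all $z$. Therefore the Dobrushin coefficient satisfies, for any $x,y$,
\[
\|W(x,\cdot)-W(y,\cdot)\|_{\mathrm{TV}}\le 1-\delta,
\]
because the two rows share mass at least $\delta$ at $z^\circ$. Submultiplicativity of the Dobrushin coefficient then gives
\[
\|W^t(x,\cdot)-\nu\|_{\mathrm{TV}}\le \sum_y\nu(y)\|W^t(x,\cdot)-W^t(y,\cdot)\|_{\mathrm{TV}}\le(1-\delta)^t,
\]
where $\nu$ is the invariant distribution from Theorem~\ref{thm:ergodicity}. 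Equivalently, we can couple two copies of the chain so that both jump to $z^\circ$ simultaneously with probability at least $\delta$ at each step.

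\emph{Conclusion.} Using $(1-\delta)^t\le e^{-\delta t}$, the distance is at most $\varepsilon$ as soon as $t\ge \delta^{-1}\log(1/\varepsilon)$. Taking the ceiling yields exactly \eqref{eq:mix_time}.
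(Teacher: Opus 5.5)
\textbf{Verdict: genuine gap.} Your contraction step is correct and matches how the paper concludes: minorization, Dobrushin coefficient at most $1-\delta$, then $(1-\delta)^t\le e^{-\delta t}$, then the ceiling. The problem is the step you call bookkeeping. In \eqref{eq:mix_time}, $\gamma_{\min}$ is $\min_v\gamma_v$, the same quantity as in Theorem~\ref{thm:exit}. Your point-mass minorization only gives $\delta=\prod_{v,i}\gamma_v q^{(i)}$ (or $\prod_{v,i}\gamma_v(1-q^{(i)})$). For homogeneous $\gamma_v\equiv\gamma$ this equals $\gamma^{nm}\prod_i (q^{(i)})^n<\gamma^{nm}$. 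Redefining $\gamma_{\min}$ to absorb the quality factors therefore proves a weaker bound, worse by a factor exponential in $nm$, not the stated one.

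A cleverer choice of target state does not fix this in general. Take $\beta_v=0$ and $P=I$, and start from the complementary state $\mathbf 1-z^\circ$. Then every coordinate must flip, and $W_{z^\circ\leftarrow\mathbf 1-z^\circ}=\big(\prod_{v,i}\gamma_v\big)\mu(z^\circ)$ with $\mu:=\bigotimes_{v,i}\mathrm{Ber}(q^{(i)})$. Since $\mu(z^\circ)<1$, you get $\min_z W_{z^\circ\leftarrow z}<\prod_{v,i}\gamma_v$ for every single target $z^\circ$.

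\textbf{The fix.} Minorize by the product measure $\mu$ instead of a point mass, so the $q$-factors go into the minorizing measure rather than into $\delta$. Since $0\le\alpha_v\sum_w P_{vw}z_{w,i}+\beta_v\Pi^{(i)}(z)\le 1-\gamma_v$, you can write $\theta_v^{(i)}(z)=\gamma_v q^{(i)}+(1-\gamma_v)r_v^{(i)}(z)$ with $r_v^{(i)}(z)\in[0,1]$. That is, each coordinate law is the mixture $\gamma_v\,\mathrm{Ber}(q^{(i)})+(1-\gamma_v)\,\mathrm{Ber}(r_v^{(i)}(z))$.

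Expand the product in \eqref{def:prob} and keep only the term in which every coordinate uses its first component. This gives $W_{z'\leftarrow z}\ge\big(\prod_{v,i}\gamma_v\big)\mu(z')\ge(\gamma_{\min})^{nm}\mu(z')$ for all $z,z'$, hence $\|W(x,\cdot)-W(y,\cdot)\|_{\mathrm{TV}}\le 1-(\gamma_{\min})^{nm}$.

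This is exactly the paper's coupling. With probability $\gamma_v$, coordinate $(v,i)$ of both copies is set to a common $\mathrm{Ber}(q^{(i)})$ draw. If this happens for all $nm$ coordinates, the copies coalesce wherever they land. With $\delta=(\gamma_{\min})^{nm}$, the rest of your argument goes through unchanged.
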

This result 
 yields an explicit upper bound on the mixing time. In particular, for a fixed number of users and channels, the bound decreases as $\gamma_{\min}$ increases. On the other hand, when $\gamma $ is small, Proposition~\ref{prop:mix_time} is less informative. We propose in the next section a specific analysis for $\gamma$ close to~0.

\section{Metastable behavior in the small-$\gamma$ regime}\label{sec:meta}

In the following analysis, we show that, when $\gamma$ is close to~$0$, the dynamics exhibit a metastable behavior. In particular, the exit probability from a monopoly state is $O(\gamma)$ and the probability of being in a monopoly state is close to one. This means that the states $Y(t)$ and vector $\Pi(t)$ spend long periods in a monopoly state before transitioning from one monopoly state to another.

\subsection{Exit time from monopoly states}

It is worth reminding that if $\gamma_v>0$ for all $v\in\mathcal V$, then ergodicity implies that every non-empty subset of the state space is eventually visited with probability one. We are now interested in understanding how much time is spent in a monopoly state.

\begin{theorem}[Exit probability from a monopoly state]\label{thm:exit}
Let $q^{(i)} \in [q_{\min},q_{\max}] \subset (0,1)$
for all $i\in\mathcal C$. Assume
$0<\gamma_{\min}\leq\gamma_v\leq\gamma_{\max}$
for all $v\in\mathcal V$.
For sufficiently small $\gamma_{\max}$, every $i\in\mathcal C$
satisfies
$$
c_1^{(i)}\gamma_{\min}
\le
\mathbb P\left(Y(t+1)\neq z^{\star,i}\mid Y(t)=z^{\star,i}\right)
\le
c_2^{(i)}\gamma_{\max},
$$
with  
$c_1^{(i)}=n \big(1-q^{(i)}\big)/2$, 
$
c_2^{(i)}=n \big((1-q^{(i)})+(m  -1)q_{\max}\big).
$
\end{theorem}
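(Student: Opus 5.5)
Compute the transition probabilities out of $z^{\star,i}$ explicitly from \eqref{def:prob} and \eqref{eq:theta}. In state $z^{\star,i}$ we have $\sum_w P_{vw} z_{w,i}=1$ (row stochasticity), $\Pi^{(i)}(z^{\star,i})=1$, and for $j\neq i$ both the network term and $\Pi^{(j)}$ vanish. Hence
$$\theta_v^{(i)}(z^{\star,i})=\alpha_v+\beta_v+\gamma_v q^{(i)}=1-\gamma_v(1-q^{(i)}),\qquad \theta_v^{(j)}(z^{\star,i})=\gamma_v q^{(j)},\quad j\neq i .$$
Since the coordinates are conditionally independent, the probability of staying is
$$p_{\rm stay}=\prod_{v\in\V}\bigl(1-\gamma_v(1-q^{(i)})\bigr)\prod_{j\neq i}\prod_{v\in\V}\bigl(1-\gamma_v q^{(j)}\bigr),$$
and the exit probability is $1-p_{\rm stay}$. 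This is the whole structure; what remains is elementary bounding of a product of $nm$ factors of the form $1-\epsilon_k$, with $\epsilon_k\in[0,1]$.

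For the \textbf{upper bound} I use the union bound (Weierstrass product inequality) $1-\prod_k(1-\epsilon_k)\le\sum_k\epsilon_k$. This gives
$$1-p_{\rm stay}\le \sum_v\gamma_v(1-q^{(i)})+\sum_{j\neq i}\sum_v\gamma_v q^{(j)}\le n\gamma_{\max}\bigl((1-q^{(i)})+(m-1)q_{\max}\bigr)=c_2^{(i)}\gamma_{\max}.$$
This bound holds for every $\gamma_{\max}$, with no smallness needed.

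For the \textbf{lower bound} I discard the factors with $j\neq i$ (each is $\le 1$), so $1-p_{\rm stay}\ge 1-\prod_v(1-\gamma_v(1-q^{(i)}))$. Writing $x_v=\gamma_v(1-q^{(i)})$, the inclusion–exclusion (Bonferroni) inequality gives $1-\prod_v(1-x_v)\ge\sum_v x_v-\sum_{v<w}x_vx_w$. Then $\sum_{v<w}x_vx_w\le \tfrac12(\sum_v x_v)^2$, so
$$1-p_{\rm stay}\ge S\bigl(1-S/2\bigr),\qquad S:=\sum_v x_v .$$
If $\gamma_{\max}$ is small enough that $S\le n\gamma_{\max}\le 1$, then $1-S/2\ge 1/2$. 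Combined with $S\ge n\gamma_{\min}(1-q^{(i)})$, this yields $1-p_{\rm stay}\ge \tfrac{n}{2}(1-q^{(i)})\gamma_{\min}=c_1^{(i)}\gamma_{\min}$.

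The only delicate point is this lower bound. One must choose a second-order correction that is controlled uniformly, which is where the requirement that $\gamma_{\max}$ be small enters (for instance $\gamma_{\max}\le 1/n$ suffices). An alternative route is $1-\prod(1-x_v)\ge 1-e^{-S}\ge S-S^2/2$, which gives the same threshold.
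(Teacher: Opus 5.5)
Your proposal is correct and follows essentially the same route as the paper. The paper also computes $\theta_v^{(i)}(z^{\star,i})=1-\gamma_v(1-q^{(i)})$ and $\theta_v^{(j)}(z^{\star,i})=\gamma_v q^{(j)}$, gets the upper bound by a union bound over ``some user drops channel $i$'' and ``some user engages another channel'', and gets the lower bound by keeping only the first event. The one difference is cosmetic: the paper first replaces each $\gamma_v$ by $\gamma_{\min}$ and then uses $1-e^{-x}\ge x/2$ with $x=n\gamma_{\min}(1-q^{(i)})<1$, which is your stated alternative route. Your Bonferroni bound on $S$ instead needs $n\gamma_{\max}\le 1$, which the theorem's smallness assumption on $\gamma_{\max}$ covers.
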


Theorem~\ref{thm:exit} guarantees that the exit probability from an exact monopoly state is uniformly bounded by quantities proportional to $\gamma$, provided $\gamma$ is sufficiently small. We now estimate the corresponding exit time.

\begin{definition}[Exit time from $z^{\star,i}$]
For each $i\in\mathcal C$, assuming
$Y(0)=z^{\star,i}$, define
$
\tau_i
:=
\inf\{t\geq 1:\ Y(t)\neq z^{\star,i}\}.
$
\end{definition}

\begin{corollary}[Expected time in a monopoly state]\label{prop:geom}
Under the hypotheses of Theorem~\ref{thm:exit}, for every $i\in\mathcal C$ there exist constants $\chi_1^{(i)},\chi_2^{(i)}>0$ such that
$
{\chi_1^{(i)}}/{\gamma_{\max}}
\le
\mathbb E[\tau_i]
\le
{\chi_2^{(i)}}/{\gamma_{\min}}.
$
\end{corollary}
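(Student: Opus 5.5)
Since $Y$ is a time-homogeneous Markov chain, the exit time $\tau_i$ from $z^{\star,i}$ is geometric, and Corollary~\ref{prop:geom} follows from Theorem~\ref{thm:exit} by computing the mean of this geometric law. Set
$$
p_i := \mathbb P\left(Y(t+1)\neq z^{\star,i}\mid Y(t)=z^{\star,i}\right),
$$
which is the one-step exit probability from the monopoly state. By \eqref{def:prob}, $p_i$ depends only on the state $z^{\star,i}$ and not on $t$.

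Starting from $Y(0)=z^{\star,i}$, the event $\{\tau_i>k\}$ coincides with $\{Y(1)=\dots=Y(k)=z^{\star,i}\}$. Applying the Markov property repeatedly gives
$$
\mathbb P(\tau_i>k)=(1-p_i)^k .
$$
Hence $\tau_i$ is geometric with success probability $p_i$. Theorem~\ref{thm:exit} gives $p_i\ge c_1^{(i)}\gamma_{\min}>0$, so $\tau_i<\infty$ almost surely and
$$
\mathbb E[\tau_i]=\sum_{k\ge0}\mathbb P(\tau_i>k)=\frac1{p_i}.
$$

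Inverting the two-sided bound of Theorem~\ref{thm:exit}, valid for $\gamma_{\max}$ sufficiently small, yields
$$
\frac{1}{c_2^{(i)}\gamma_{\max}}\le \mathbb E[\tau_i]\le \frac{1}{c_1^{(i)}\gamma_{\min}} .
$$
We therefore take
$$
\chi_1^{(i)}=\frac1{c_2^{(i)}}=\frac{1}{n\big((1-q^{(i)})+(m-1)q_{\max}\big)},\qquad
\chi_2^{(i)}=\frac1{c_1^{(i)}}=\frac{2}{n(1-q^{(i)})}.
$$
Both constants are positive and finite, since $q^{(i)}<1$ and $q_{\max}\in(0,1)$.

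The only point needing care is that the transition kernel out of $z^{\star,i}$ is time-independent, so the exit indicators at successive steps form a Bernoulli sequence with the same parameter $p_i$. This holds because $W$ does not depend on $t$. All other steps are routine.
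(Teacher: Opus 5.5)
Your proof is correct and follows the intended route. The paper writes out no separate proof of this corollary and presents it as an immediate consequence of Theorem~\ref{thm:exit}. Your geometric law $\mathbb P(\tau_i>k)=(1-p_i)^k$, giving $\mathbb E[\tau_i]=1/p_i$, makes that consequence precise, and $\chi_1^{(i)}=1/c_2^{(i)}$, $\chi_2^{(i)}=1/c_1^{(i)}$ are valid explicit constants.
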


\subsection{Concentration on the set of monopoly states}

We now show that, when $\gamma$ is small, the invariant distribution concentrates on the monopoly states.
\begin{theorem}[Concentration on $\mathcal{Z}^{\star}$]\label{thm:metastability}
Let $\nu_\gamma$ be the invariant distribution of the Markov chain $Y$ and denote 
$
\nu_\gamma(\mathcal{Z}^{\star}):=\sum_{y\in\mathcal{Z}^{\star}}\nu_\gamma(y).
$
Assume that the parameters $\alpha_v$, $\beta_v$, $\gamma_v$ are all positive. Then, 
$ \lim_{\gamma\to 0}\nu_\gamma(\mathcal{Z}^{\star})=1.$
\end{theorem}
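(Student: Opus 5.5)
We interpret $\gamma\to 0$ as sending all $\gamma_v\to 0$ (for instance $\gamma_v=\gamma$, or $\gamma_{\max}\to 0$ with $\gamma_{\max}/\gamma_{\min}$ bounded). Meanwhile $\alpha_v,\beta_v$ stay bounded away from $0$, say converging to positive limits $\alpha_v^0,\beta_v^0$ with $\alpha_v^0+\beta_v^0=1$. The strategy is a perturbation argument on the finite state space $\{0,1\}^{\mathcal V\times\mathcal C}$. The transition matrix $W_\gamma$ depends continuously (polynomially) on the parameters through $\theta_v^{(i)}(z)$ in \eqref{eq:theta}. At $\gamma=0$ it reduces to the chain $W_0$ of Theorem~\ref{thm:no_quality_all_alpha}. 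By that theorem, under strong connectivity of $\G$, $W_0$ has exactly the monopoly states $\mathcal Z^\star$ as absorbing states, and every other state is transient with absorption almost sure. Hence every invariant distribution of $W_0$ is supported on $\mathcal Z^\star$. We use strong connectivity as in Theorem~\ref{thm:no_quality_all_alpha}; it seems implicitly needed.

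\textbf{Step 1 (compactness).} The family $\{\nu_\gamma\}$ lives in the compact simplex of probability measures on a finite set. Take any sequence $\gamma_k\to 0$ and extract a subsequence with $\nu_{\gamma_k}\to\bar\nu$. Passing to the limit in $\nu_{\gamma_k}^\top W_{\gamma_k}=\nu_{\gamma_k}^\top$, using entrywise convergence $W_{\gamma_k}\to W_0$, gives $\bar\nu^\top W_0=\bar\nu^\top$. So $\bar\nu$ is $W_0$-invariant.

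\textbf{Step 2 (invariant measures of $W_0$ live on $\mathcal Z^\star$).} Let $\mu$ be $W_0$-invariant and $T$ the set of non-monopoly states. Then $\mu(T)=\mathbb P_\mu(Y(t)\in T)$ for all $t$. By part (iii) of Theorem~\ref{thm:no_quality_all_alpha}, $\mathbb P_z(Y(t)\in T)\to 0$ for every $z$, so dominated convergence over the finite sum gives $\mu(T)=0$. One subtlety must be checked: the all-zero state and states where some channel has zero support. The all-zero state is not absorbing, because the convention $\Pi=1/m$ gives $\theta>0$. Part (ii) then guarantees reachability of some monopoly from any state with nonzero support, and absorption holds from every initial state.

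\textbf{Step 3 (conclusion).} Combining the two steps, every subsequential limit $\bar\nu$ satisfies $\bar\nu(\mathcal Z^\star)=1$. Since $\nu_\gamma(\mathcal Z^\star)$ is a finite sum of coordinates, it is continuous under this convergence, so every subsequence of $\nu_\gamma(\mathcal Z^\star)$ has a further subsequence tending to $1$. Therefore $\lim_{\gamma\to0}\nu_\gamma(\mathcal Z^\star)=1$.

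The main obstacle is Step 2: making sure the limiting chain $W_0$ really has no invariant mass outside $\mathcal Z^\star$, i.e. that absorption in $\mathcal Z^\star$ holds from every state. This is where Theorem~\ref{thm:no_quality_all_alpha} (iii) and the connectivity assumption are essential.

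As a quantitative alternative, one can bound the escape and return rates. Theorem~\ref{thm:exit} gives a leaving probability $O(\gamma)$ from each $z^{\star,i}$. For the return, the chain is absorbed into $\mathcal Z^\star$ within a bounded expected time uniformly in small $\gamma$, by continuity of hitting times for finite chains. A renewal/Kac argument then gives $\nu_\gamma(T)=O(\gamma)$. I would use the compactness argument as the main proof and mention this rate as a remark.
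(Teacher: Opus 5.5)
Your compactness argument is sound in structure, but as written it proves a weaker statement than the theorem. Step~2 rests on Theorem~\ref{thm:no_quality_all_alpha}(iii), which requires $\mathcal G$ to be strongly connected, so you add that hypothesis and say it ``seems implicitly needed.'' It is not needed, and the statement does not assume it. The missing observation is a uniform one-step entrance bound that uses only the global popularity term.

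Take any state $z$, including the all-zero one, and let $i^\ast$ maximize $\Pi^{(i)}(z)$. Then $\Pi^{(i^\ast)}(z)\ge 1/m$ and $\Pi^{(j)}(z)\le 1-1/m$ for $j\neq i^\ast$, so $\theta_v^{(i^\ast)}(z)\ge\beta_v\Pi^{(i^\ast)}(z)\ge\beta_v/m$. Since $\alpha_v\sum_w P_{vw}z_{w,j}\le\alpha_v$, for $j\ne i^\ast$ we also get $1-\theta_v^{(j)}(z)\ge\beta_v\bigl(1-\Pi^{(j)}(z)\bigr)+\gamma_v\bigl(1-q^{(j)}\bigr)\ge\beta_v/m$. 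Hence $W_{z^{\star,i^\ast}\leftarrow z}\ge b_0:=(\beta_{\min}/m)^{mn}$ for every $z$, every row-stochastic $P$, and every $\gamma\ge 0$. Under $W_0$ this gives $\mathbb P_z(Y(t)\notin\mathcal Z^\star)\le(1-b_0)^t$ from every initial state. That is exactly what Step~2 needs, with no connectivity assumption and no need for $\alpha_v^0>0$. With this substitution your proof is complete.

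The paper uses this same bound $b_0$, but in a direct quantitative argument rather than a compactness one. It combines $b_0$ with the uniform exit estimate $c_2\gamma_{\max}$ from Theorem~\ref{thm:exit}. This gives $p_{t+1}\le(1-b_0)p_t+c_2\gamma_{\max}$ for $p_t:=\mathbb P(Y(t)\notin\mathcal Z^\star)$, hence $p_t\le(1-b_0)^t+c_2\gamma_{\max}/b_0$. Ces\`aro averaging plus ergodicity then yield $\nu_\gamma(\mathcal Z^\star)\ge 1-c_2\gamma_{\max}/b_0$.

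The paper's route buys two things your soft argument lacks:
\begin{itemize}
\item an explicit $O(\gamma_{\max})$ rate, where yours gives only the limit;
\item a weaker requirement on the other parameters: it needs only $\beta_{\min}$ bounded away from zero, not convergence of $\alpha_v,\beta_v$ along the sequence (or a further extraction over the compact parameter set).
\end{itemize}
Your perturbation argument is more general in form: it works for any continuous family of finite chains whose limiting chain puts all invariant mass on $\mathcal Z^\star$. Your closing ``quantitative alternative'' is essentially the paper's proof, with the uniform bound on the return time to $\mathcal Z^\star$ supplied concretely by $1/b_0$ rather than by continuity of hitting times.
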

Being the Markov chain ergodic, this fact implies that also the fraction of time that the dynamics spends on the monopoly states goes to one as $\gamma$ goes to zero.
We refer to this behavior as the Markov chain locking-in on the monopoly states.

\begin{remark}[Mitigating metastable lock-in through platform exploration]
The metastable behavior described above suggests a natural platform-level intervention based on controlled exploration. The popularity-based exposure term $\Pi(z)$ can be replaced by
$
\mu_\eta(z)=(1-\eta)\Pi(z)+\eta d,
\  \eta\in(0,1],
$
where $d$ is in the interior of the simplex, thereby ensuring that every channel receives a strictly positive exposure. 

This modification has two immediate consequences. First, if $\beta_v>0$ for all $v\in\mathcal V$, positive exploration removes the absorbing monopoly configurations even when $\gamma_v=0$ for all~$v$ and renders the resulting Markov chain irreducible and aperiodic. Hence, platform exploration can play the same qualitative regularizing role as the intrinsic-quality term in preventing permanent lock-in.
Second, 
the probability of leaving a monopoly configuration acquires an additional contribution proportional to the exploration intensity. More precisely, for suitable positive constants depending on the model parameters,
\begin{center}$
\mathbb P_\eta
\left(
Y(t+1)\notin\mathcal Z^\star
\,\middle|\,
Y(t)\in\mathcal Z^\star
\right)
=
\Omega(\gamma+\eta),
$\end{center}
which suggests the residence-time scaling
$
\mathbb E[\tau_{\mathcal Z^\star}]
=
O\!\left((\gamma+\eta)^{-1}\right).
$
Thus, intrinsic quality and platform exploration act as two complementary mechanisms against popularity lock-in. 
\end{remark}

\section{Numerical Experiments}\label{sec:numerics}

In this section, we present some numerical results to characterize the long-term behavior of popularity $\Pi^{(i)}(t)$ and highlight how intrinsic quality, social influence, and platform feedback are reflected in the empirical statistics.

\subsection{Empirical distribution of popularity}

\begin{figure*}
\begin{center}
\begin{tabular}{ccc}
\includegraphics[width=0.3\columnwidth]{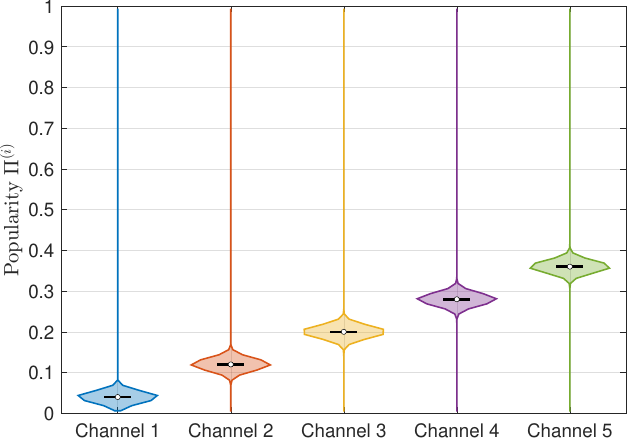}&
\includegraphics[width=0.3\columnwidth]{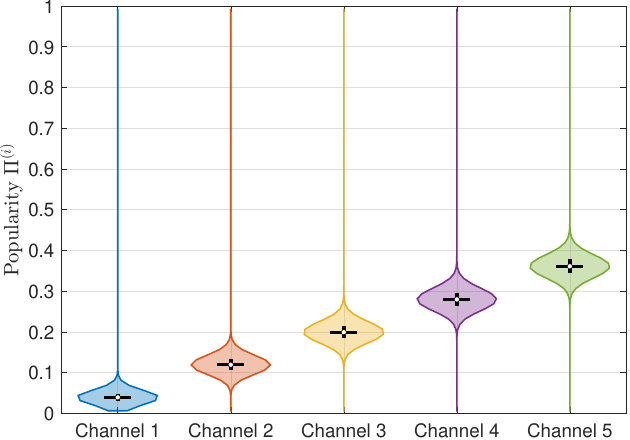}&
\includegraphics[width=0.3\columnwidth]{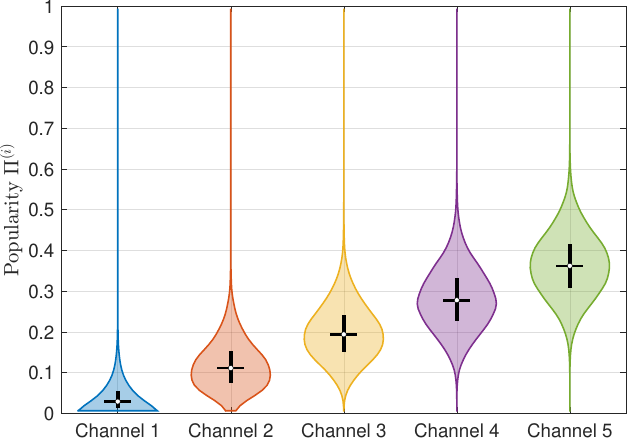}\\
(a)&(b)&(c)
\end{tabular}
\caption{Violin plots of the long-run popularity $\Pi^{(i)}(t)$ for each channel $i=1,\dots,5$, estimated from $60$ Monte Carlo runs: (a) $\alpha=0.1$, $\beta=0.1$, and $\gamma=0.8$; (b) $\alpha=0.45$, $\beta=0.45$, and $\gamma=0.1$; (c) $\alpha=0.09$, $\beta=0.9$, and $\gamma=0.01$.}\label{fig:empirical_pdf}\end{center}
\end{figure*}

We consider a social network of $n=500$ users and $m=5$ channels, with trajectories simulated for $T=5000$ discrete time steps. The initial condition is generated by independently sampling Bernoulli variables with parameter $\theta_0=0.01$. To reduce the impact of the initial transient, all statistics are computed over the last $L=2500$ steps, i.e., $t\in{T-L+1,\dots,T}$. Results are averaged over $60$ independent Monte Carlo runs.
Social influence is modeled by a directed Erd\H{o}s--R\'enyi graph with connection probability $p=0.08$. The influence matrix $P$ is obtained from the adjacency matrix by row-wise normalization. Model weights are homogeneous, with $\alpha_v=\alpha_0$, $\beta_v=\beta_0$, and $\gamma_v=\gamma_0$ for all $v\in\mathcal V$. Channel qualities $q^{(i)}$ are equally spaced in $[0.1,0.9]$, imposing a clear quality ordering.

Figure~\ref{fig:empirical_pdf} shows the empirical long-run distributions of popularity $\Pi^{(i)}(t)$ using violin plots. Figure~\ref{fig:empirical_pdf}(a) considers $\alpha_0=\beta_0=0.1$ and $\gamma_0=1-\alpha_0-\beta_0=0.8$, so that quality is the dominant contribution. The distributions are consequently ordered according to $q^{(i)}$: as quality increases, the corresponding violin shifts toward higher popularity values, whereas low-quality channels concentrate their probability mass at small popularity values, suggesting that they rarely obtain a significant share of engagement. The stochasticity induced by Bernoulli engagement and random network realizations is reflected in the vertical spread of the violin plots, while the separation of their central regions indicates a predominantly quality-driven regime.
The next two configurations illustrate how this picture changes as intrinsic quality is reduced and popularity feedback becomes increasingly dominant. With $\alpha_0=\beta_0=0.45$ and $\gamma_0=0.10$, the regime changes radically: users' attention is driven primarily by social contagion and popularity feedback, while intrinsic quality becomes secondary. Accordingly, in Figure~\ref{fig:empirical_pdf}(b), the popularity distributions become broader, indicating greater variability around their typical values. Channels with different qualities can exhibit similar distributions, and, in some realizations, a lower-quality channel can attain high popularity through initial advantages and algorithmic amplification.
Furthermore, larger values of $\beta_0$ make the process highly self-reinforcing, as illustrated in Figure~\ref{fig:empirical_pdf}(c). When a channel gains even a moderate engagement advantage, the popularity-feedback component increases its probability of being chosen/engaged at the next step. This produces substantially broader violin plots and greater vertical spread.

\subsection{Numerical investigation of metastability}
In this section we consider a social network with $n=20$ users and $m=3$ channels. 
In our setup, the influence network is sampled once from a directed Erdős- Rényi graph with connection probability $p=0.5$, self-loops, and independent uniform edge weights, and the resulting influence matrix is row-normalized. 
The quality parameters are set to $q^{(1)}=0.1, q^{(2)}=0.4, q^{(3)}=0.5$ and the initial individual preferences $X_v^{(i)}(0)$ are generated uniformly at random in the range $[0,1]$. 
We then consider three main regimes: weak metastability ($\gamma_v=0.1$), moderate metastability ($\gamma_v=0.01$),
and strong metastability ($\gamma_v=0.001$). In all cases, the $\alpha_v$ are drawn from a uniform distribution between 0 and 0.5, and the platform influence is given by $\beta_v=1-\alpha_v-\gamma_v$.
To estimate the stationary expected popularity profile, we additionally run $200$ independent Monte Carlo trajectories of length $1.5\cdot10^4$ and average the popularity over the final $5\cdot10^3$ samples of each run.

\begin{figure*}
\begin{center}
        \includegraphics[width=0.97\columnwidth]{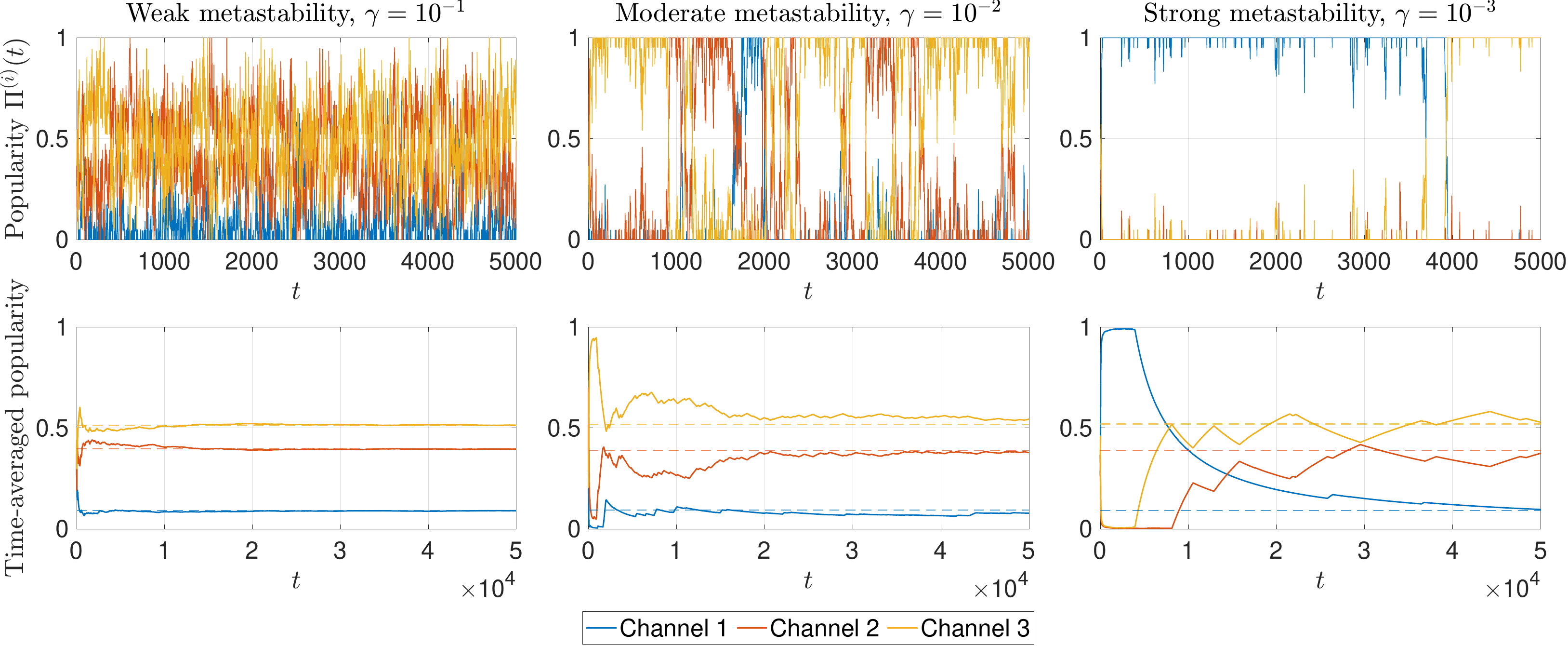}\\
  \end{center}  \caption{Popularity (top) and time-averaged popularity (bottom) as functions of time in different regimes of quality influence.} 
    \label{fig:cp}
\end{figure*}
\begin{figure*}[ht]
        \begin{tabular}{ccc}
            \includegraphics[width=0.3\columnwidth]{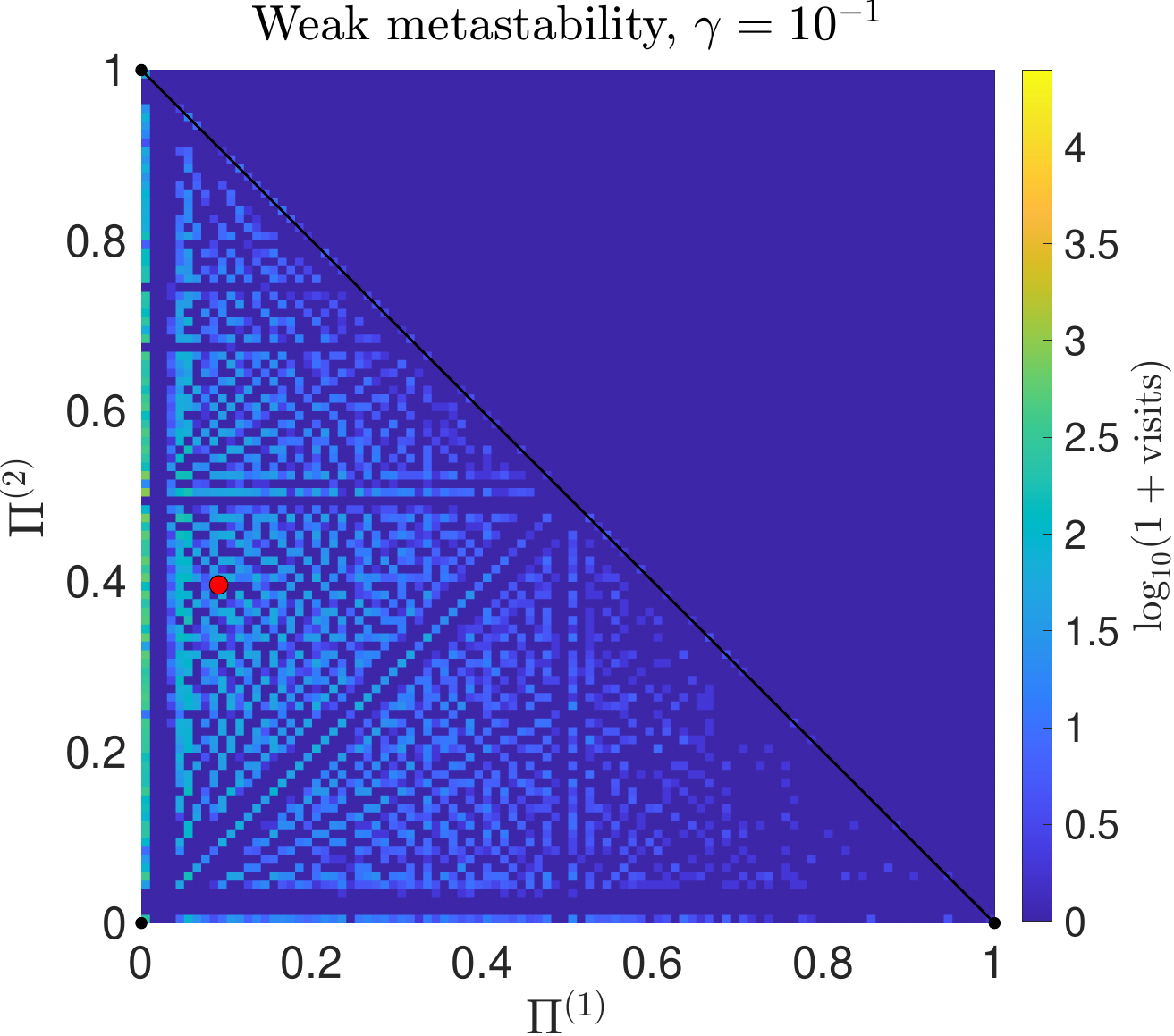}&
                \includegraphics[width=0.3\columnwidth]{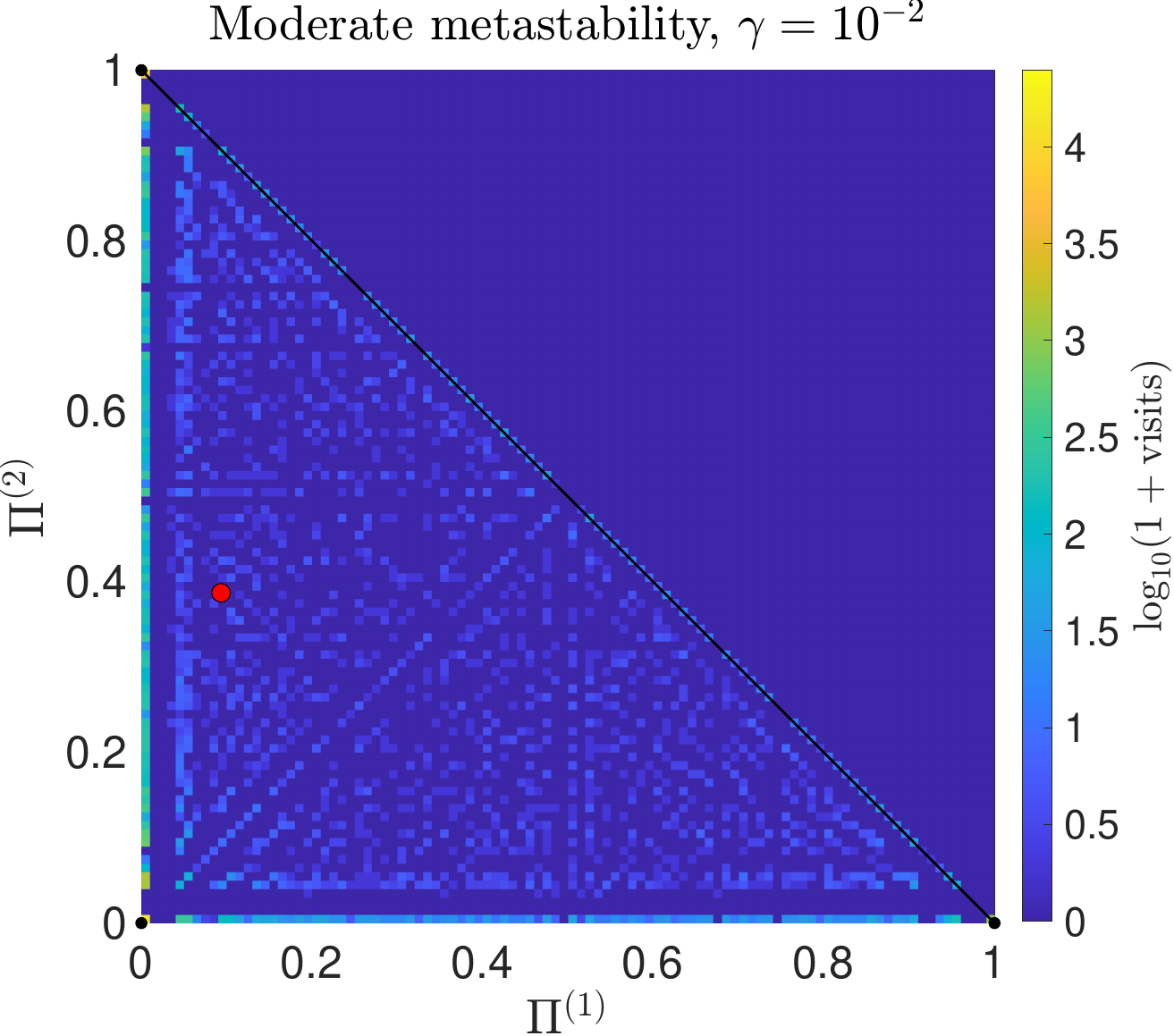}
&    \includegraphics[width=0.3\columnwidth]{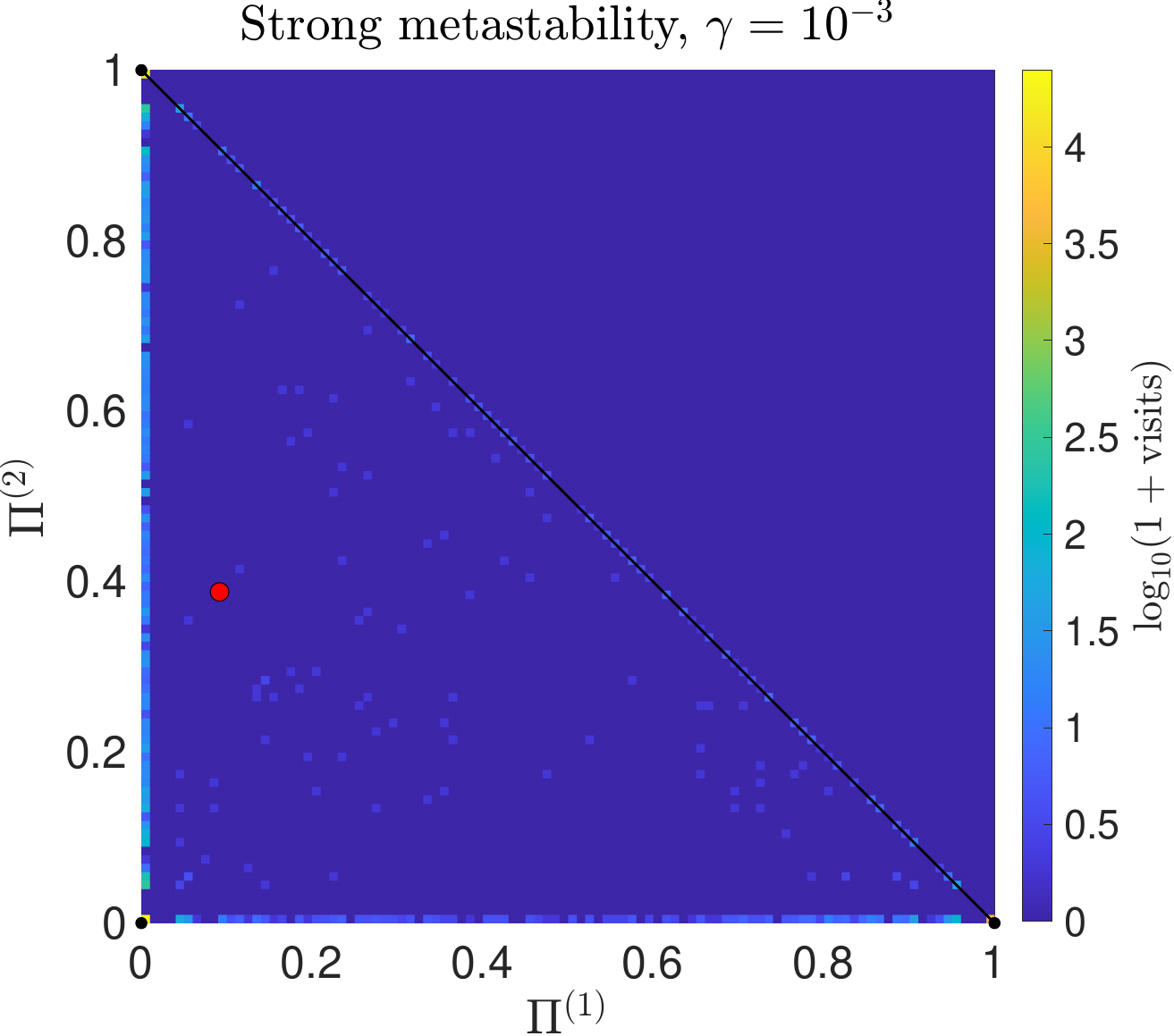}
\end{tabular}
\caption{Popularity heat maps under different quality-influence regimes; the red dot marks the limiting time-averaged popularity.}    \label{fig:cp_heat}
\end{figure*}

Figure~\ref{fig:cp} shows the evolution of the popularity (top plot) and of its time average (bottom plot) for each regime.
As already observed, the system is ergodic and the time averages approach a stable profile. The dashed lines correspond to the estimated $\mathbb{E}_{\nu}[\Pi_{\infty}]$. 
When $\gamma$ approaches zero, then $\Pi^{(i)}(t)$ spends more time in states in which popularity is concentrated on a single channel, remaining in each monopoly state for longer and having less pronounced oscillations.

The heatmap in Figure~\ref{fig:cp_heat} provides further insight into the states visited by the popularity process. 
The axes correspond to the popularity of Channels~1 and~2, and the points $(1,0)$, $(0,1)$, and $(0,0)$ correspond, respectively, to popularity concentrated on Channels~1, 2, and~3.
The red dot indicates the estimated stationary expected popularity profile, while color intensity indicates visitation frequency. Under weak metastability, the most frequently visited state corresponds to popularity concentrated on Channel~2, followed by Channel~1: as expected, popularity tends to concentrate on the channel with higher quality.
As $\gamma$ decreases, then the empirical visitation measure shifts toward the boundary of the popularity simplex, indicating increasingly persistent highly concentrated or quasi-monopoly configurations.

\section{Empirical evidence on YouTube}\label{sec:emprical_analysis}
In this section, we provide empirical evidence of stochastic ergodic behaviors in popularity data from YouTube, thereby supporting the development of a stochastic model of popularity like ours.

The raw dataset for our study is a collection of 316908 videos, each characterized by: channel of publication, identifying name, video title, date and hour of publication, number of views, number of likes, number of comments, and duration.
These videos constitute the whole YouTube publication history, from 2006 to 2023, of the 47 most prominent French news channels (identified with the help of French media experts). Within this initial dataset, we restrict our attention to the period June 2020--July 2023 (after the COVID-19 disruptions) and to the 21 channels that have the most extensive and uninterrupted publication practices.

We choose to measure popularity as the number of views of the videos proposed by a channel, suitably normalized. More precisely, if $N^{(i)}(t)$ represents the number of views for channel $i$ at time $t$, the normalized popularity measure is
\begin{equation} \label{def:visualz_norm}
\Pi ^{(i)}(t) = \frac{ N^{(i)}(t)}{\sum_{j\in\mathcal{C}}N^{(j)}(t)}.
\end{equation}
The time step used to evaluate the trends in popularity is one week: this choice is consistent with the publishing practices of the channels, which tend to have weekly rhythms, e.g.\ being more active on working days than during weekends~\cite{castaldo2021rhythms}.
Figure~\ref{fig:evolution_visuals_both} shows the evolution of normalized views $\Pi(t)$ as defined in \eqref{def:visualz_norm}: the measure of popularity exhibits irregular behavior, which is smoothed if we consider the time averages of normalized views
$\bar{\Pi}^{(i)}(T) = \frac1T\sum_{t=1}^T \Pi ^{(i)}(t), $ for $ i\in\mathcal{C}$,
as in the right panel.
\begin{figure}[h]
    \centering
    \includegraphics[width=0.9\columnwidth]{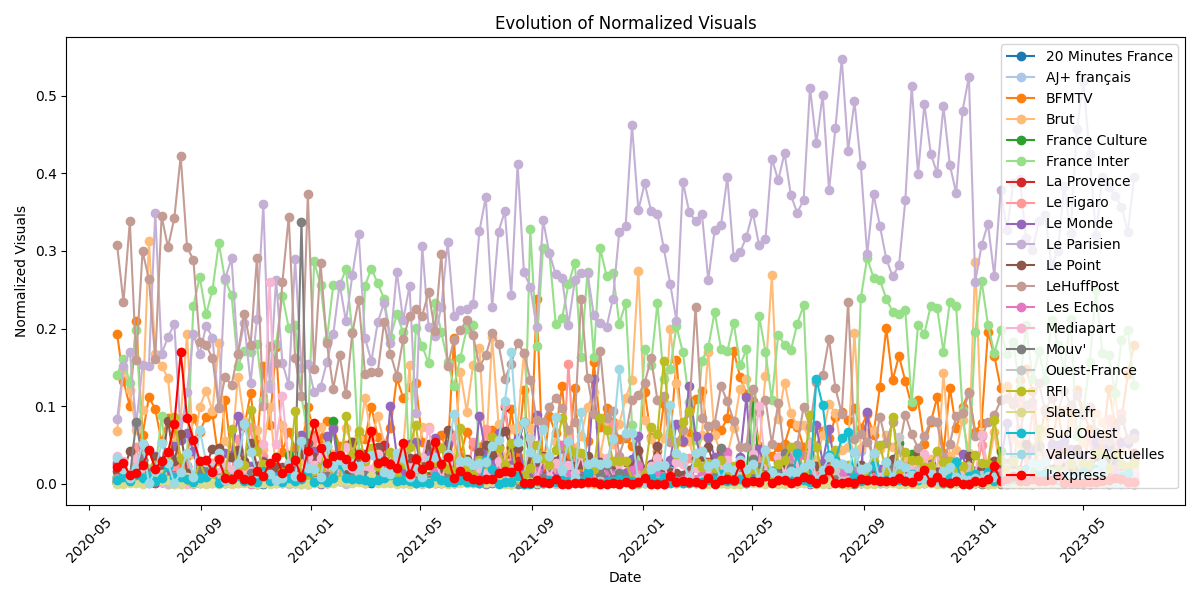}
    \includegraphics[width=0.9\columnwidth]{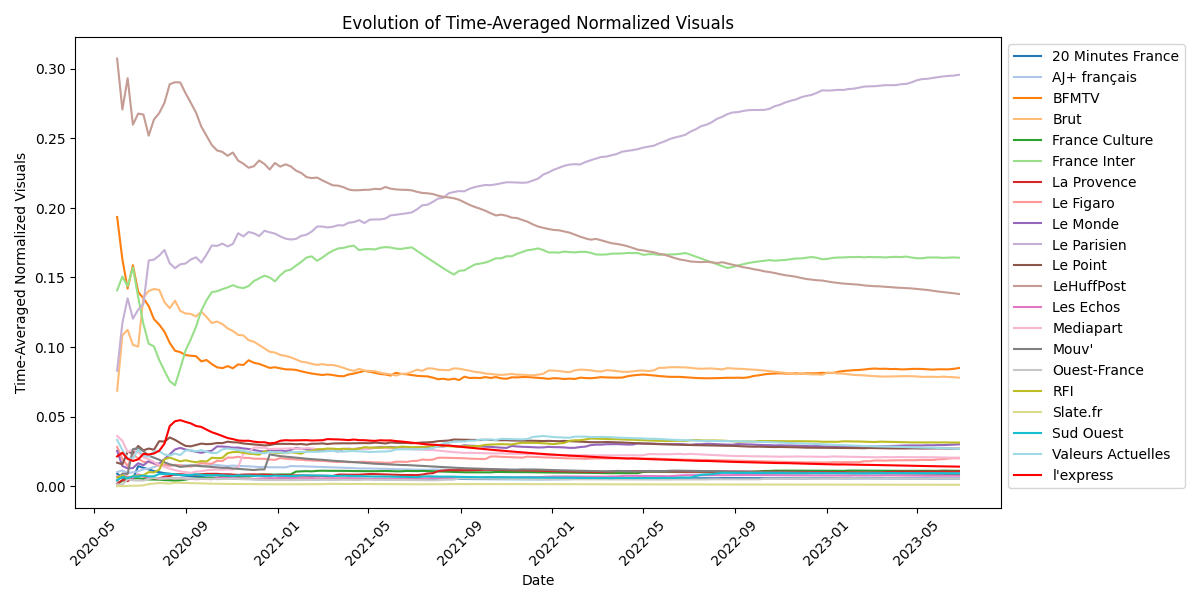}
    \caption{\small Evolution of the popularity measure, i.e., the normalized number of views (left), and its time averages (right) for the main French news YouTube channels.}
        \label{fig:evolution_visuals_both}
\end{figure}
Comparing the two diagrams, we observe that popularity exhibits a behavior that is compatible with stationarity/time-averaging: most channels, including the prominent \textit{France Inter}, \textit{Brut}, and \textit{BFMTV}, show convergence of the time averages. Two channels, \textit{Le Parisien} and \textit{Le HuffPost}, exhibit increasing/decreasing trends: such trends may depend on exogenous forces or on the evolution in time of the channel qualities, which are not accounted for in our model.

\section{Concluding remarks}\label{sec:conclusions}
In this work, we proposed a stochastic model for the evolution of attention and popularity in social networks. Empirical data from YouTube provide evidence that real popularity trajectories display behaviors consistent with ergodic stochastic dynamics.
The analysis of the model showed that it exhibits qualitatively different regimes, depending on the strength of the influence from the quality of content.
The numerical experiments illustrated the different asymptotic behaviors predicted by the theory and clarified the role of the model parameters in the distribution of popularity. 

Several directions remain open for future research. A first extension concerns the introduction of time-varying qualities, in order to account for the fact that the attractiveness of a channel may evolve over time. A second direction of our current research regards the study of alternative exposure mechanisms and control techniques for anchoring attention to high-quality content. 

\section{Acknowledgments}
The authors would like to thank Yhaya Bajdadi and Titouan Vial for their help in collecting and analyzing the YouTube dataset, as well as Gilles Bastin, Emmanuel Marty, and Jérôme Pacouret for compiling the list of French news outlets. 

During the preparation of this manuscript, the GPT-5.4 system was used for proofreading, editing, and drafting parts of the simulation code. The scientific content, analyses, interpretations, and conclusions were developed entirely by the authors.
\bibliographystyle{IEEEtran}
\bibliography{references}
\appendix
\subsection{Proof of Theorem~\ref{thm:no_quality_all_alpha}}\label{app:A}
We prove the three claims in order.

{(\emph{Absorbing set})}
For every $j\in\mathcal{C}$, the state $z^{\star,j}$ represents the scenario in which all users engage only with channel $j$.
Then, for the channel $j\in\mathcal{C}$ and for all $v\in\mathcal{V}$, we have
\begin{align*}
&\mathbb{P}\left(Y_v^{(j)}(t+1)=1 \mid Y(t)=z^{\star,j}\right)\\
&\quad
= \alpha_v\sum_{w\in\mathcal{V}}P_{vw}\cdot 1 + \beta_v\cdot 1 = \alpha_v+\beta_v = 1.
\end{align*}
For all the other channels $i\neq j$, we get
\begin{align*}
&\mathbb{P}\left(Y_v^{(i)}(t+1)=1 \mid Y(t)=z^{\star,j}\right)\\
&\quad= \alpha_v\sum_{w\in\mathcal{V}}P_{vw}\cdot 0 + \beta_v\cdot 0 = 0
\qquad \forall v\in\mathcal{V}.
\end{align*}
This implies that
$
W_{z^{\star,j}\leftarrow z^{\star,j}}=1,
$ and $
W_{z'\leftarrow z^{\star,j}}=0,$ $\forall z'\neq z^{\star,j}.
$
We conclude that all states in $\mathcal{Z}^{\star}$ are absorbing.

Moreover, if the starting state is $z\notin\mathcal{Z}^{\star}$, then from \eqref{eq:theta} at least one entry of $z$ can change with positive probability, and therefore
$
W_{z\leftarrow z}<1.
$
Hence, the states in $\mathcal{Z}^{\star}$ are the only absorbing states.

\smallskip
{(\emph{Reachability})}
It is immediate to observe that if the column vector $Y^{(j)}(0)$ has all entries equal to zero, then the state $z^{\star,j}$ cannot be reached, because channel $j$ has no initial positive engagement and there is no quality term influencing the dynamics.

We now prove that $\sum_{v\in\mathcal{V}}Y_v^{(j)}(0)\geq 1$ is sufficient for $z^{\star,j}$ to be reachable with positive probability.
Indeed, if $\sum_{v\in\mathcal{V}}Y_v^{(j)}(0)\geq 1$, then $\Pi^{(j)}(0)>0$ and, since $\beta_v>0$ for all $v$, we get
\begin{align*}
&\mathbb{P}\left(Y_v^{(j)}(1)=1 \mid Y(0)\right)
\\
&\qquad=
\alpha_v\sum_{w\in\mathcal{V}}P_{vw}Y_w^{(j)}(0)
+\beta_v\Pi^{(j)}(0)
>0.
\end{align*}

Notice that for all $i\neq j$ we have
\begin{align*}
X_v^{(i)}(1)
&=
\alpha_v\sum_{u\in\mathcal{V}}P_{vu}Y_u^{(i)}(0)+\beta_v\Pi^{(i)}(0)
\\
&\le
\alpha_v\sum_{u\in\mathcal{V}}P_{vu}+(1-\alpha_v)(1-\Pi^{(j)}(0))
\\
&\leq 1-\beta_v\Pi^{(j)}(0),
\end{align*}
where we used $\beta_v=1-\alpha_v$ because $\gamma_v=0$. Therefore,
$$
\mathbb{P}\left(Y_v^{(i)}(1)=0 \mid Y(0)\right)
=1-X_v^{(i)}(1)
\geq \beta_v\Pi^{(j)}(0)>0.
$$
Let now $w\in\mathcal{V}$ be such that $Y_w^{(j)}(0)=1$.
Since $P$ is strongly connected, for every node $v\in\mathcal{V}$ there exists a directed path
$
w=v_0 \to v_1 \to \cdots \to v_L=v
$
such that $P_{v_{k+1}v_k}>0$ for all $k=0,\dots,L-1$.
By induction along the path, one shows that there is a strictly positive probability that, at successive times, each node $v_k$ engages with channel $j$ and does not engage with the other channels. Hence, after finitely many steps, every node can reach the configuration
$
Y_v^{(j)}=1,\  Y_v^{(i)}=0, \ \forall i\neq j,
$
with strictly positive probability. Therefore, the absorbing state $z^{\star,j}$ is reachable.

\smallskip
{(\emph{Convergence})}
Let $z$ be any initial state. If $z\in\mathcal{Z}^{\star}$, there is nothing to prove. If $z\notin\mathcal{Z}^{\star}$, then either:
\begin{itemize}
\item there exists at least one channel $j$ such that $\sum_{v\in\mathcal{V}}Y_v^{(j)}(0)\geq 1$, and then by point (ii) the absorbing state $z^{\star,j}$ is reachable from $z$ with positive probability;
\item or $Y_v^{(i)}(0)=0$ for all $v\in\mathcal{V}$ and all $i\in\mathcal{C}$, in which case, by the convention in \eqref{eq:rankingstoch}, $\Pi^{(j)}(0)=1/m  >0$ for all $j$, and again every $z^{\star,j}$ is reachable with positive probability.
\end{itemize}
Hence, from every non-absorbing state, at least one absorbing state is reachable. 
Therefore, the Markov chain is absorbed in $\mathcal{Z}^{\star}$ with probability one. 

\section{Proofs for the winning probabilities}
\label{app:winning_probabilities}

\subsection{Proof of Proposition~\ref{thm:two_community_network}}
\label{app:two_community_network}

Since $\alpha_v=1-\beta$ for every $v\in\mathcal V$, we have
$A=(1-\beta)I$ and hence
$
AP=(1-\beta)P.
$
Then, the Perron eigenvalue introduced in
Theorem~\ref{thm:winner_decomposition} is
$\lambda=1-\beta$, while its normalized left Perron eigenvector
$\pi$ coincides with the stationary distribution of $P$.
For the two-community matrix $P$, this distribution is constant within each
community. Standard calculations lead to 
$$
\pi_v=
\begin{cases}
\dfrac{b}{(a+b)n_1}, & v\in\mathcal V_1,\\
\dfrac{a}{(a+b)n_2}, & v\in\mathcal V_2.
\end{cases}
$$
Then for every state $z\in\{0,1\}^{\mathcal{C}\times\mathcal{V}}$ and channel $i$,
$$
C_i(z)
=
\sum_{v\in\mathcal V}\pi_v z_{v,i}
=
\frac{b}{a+b}\frac{k_{1,i}(z)}{n_1}
+
\frac{a}{a+b}\frac{k_{2,i}(z)}{n_2},
$$
which proves~\eqref{eq:core_periphery_network_score}.

We now prove part~(ii). We first observe that
\begin{align*}
C_i(z)-\frac{k_i(z)}{n}
&=
\frac{bn_2-an_1}{n(a+b)}
\left(
\frac{k_{1,i}(z)}{n_1}
-
\frac{k_{2,i}(z)}{n_2}
\right).
\end{align*}
Moreover, for $S(z)>0$,
$
\Pi^{(i)}(z)={k_i(z)}/{S(z)}
$
and
$$
\Pi^{(i)}(z)-C_i(z)
=
\left(
\frac{k_i(z)}{S(z)}-\frac{k_i(z)}{n}
\right)
-
\left(
C_i(z)-\frac{k_i(z)}{n}
\right)
$$

For $v\in\mathcal V_r$, the conditional success
probability of $Y_v^{(i)}(t+1)$, given $Y(t)=z$, is
$
(1-\beta)\sum_{w\in\mathcal V}P_{vw}z_{w,i}
+
\beta\Pi^{(i)}(z).
$
By the definition of $P$, we obtain
$$
\sum_{w\in\mathcal V}P_{vw}z_{w,i}
=
\sum_{s=1}^2
\frac{B_{rs}}{n_s}
\sum_{w\in\mathcal V_s}z_{w,i}
=
\sum_{s=1}^2B_{rs}\frac{k_{s,i}(z)}{n_s}.
$$
Furthermore, $\Pi^{(i)}(z)$ depends on $z$ only through its aggregate
counts and is equal to $p_i(k(z))$. Consequently, the conditional law of
the next aggregate state depends on the current microscopic state only
through $k(z)$. Hence $K(t)$ is itself a Markov chain.

In particular, conditionally on $K(t)=k$, all nodes in community $r$ have
the same success probability
$$
\theta_{r,i}(k)
=
(1-\beta)\sum_{s=1}^2B_{rs}\frac{k_{s,i}}{n_s}
+
\beta p_i(k).
$$
Since the variables $Y_v^{(i)}(t+1)$ are conditionally independent across
$v$, it follows that
$
K_{r,i}(t+1)
\sim
\operatorname{Binomial}
\left(n_r,\theta_{r,i}(k)\right),
$
proving~\eqref{eq:core_periphery_theta}.

It remains to derive the expression for the hitting probability.
Notice that both terms on the right-hand side of \eqref{eq:winner_decomposition} depend on $Y(t)$ only through $K(t)$. More
precisely, for an aggregate state $k$,
$$
\Pi^{(i)}(Y(t))-C_i(Y(t))
=
r_i(k),
$$
where
$$
r_i(k)
=
p_i(k)
-
\frac{b}{a+b}\frac{k_{1,i}}{n_1}
-
\frac{a}{a+b}\frac{k_{2,i}}{n_2}.
$$

Let $\widehat Q$ denote the transition matrix of the aggregate chain
restricted to its transient states. By the standard fundamental-matrix
formula for finite absorbing Markov chains,
$$
\mathbb E_k
\left[
\sum_{t=0}^{\tau_{\mathcal Z^\star}-1}r_i(K(t))
\right]
=
\left[
(I-\widehat Q)^{-1}r_i
\right](k),
$$
which, applied to the initial state $k(z)$, gives
~\eqref{eq:core_periphery_closed_hitting}.

\subsection{Proof of Theorem~\ref{thm:ergodicity}}
We begin the proof by proving the irreducibility of the chain. 
We know that
\begin{equation*}
W_{z'\leftarrow z} = \prod_{i=1}^m \prod_{v \in \mathcal{S}^{(i)}({z'})} (\theta_z)^{i}_{v} \prod_{v \notin \mathcal{S}^{(i)}({z'})} \big(1 -(\theta_z)^{i}_{v}\big),
\end{equation*}
where 
$
(\theta_z)^{i}_{v}
$
is defined by \eqref{eq:theta}.
By hypothesis $\gamma_v>0 \ \forall v$ and $q^{(i)}>0 \ \forall i$, then $(\theta_z)^{i}_{v}>0, \ \forall v, \ \forall i, \ \forall z$. At the same time $q^{(i)}<1 \ \forall i$ implies that 
\begin{align*}
(\theta_z)^{i}_{v}& \leq \alpha_v+\beta_v+\gamma_vq^{(i)}<1.
\end{align*}
From the latter observations, we can conclude that 
$
W_{z'\leftarrow z}>0
$ for all $z, z'$.
This means that it is possible to go from one state to any other state at any time.
We conclude that the chain is irreducible and aperiodic. Since the Markov chain has a finite state space, irreducibility implies the existence and uniqueness of an invariant distribution, and together with aperiodicity it yields convergence in distribution (see \cite{Borkar1995}). We conclude that \eqref{eq:ergodicity} holds almost surely.

\subsection{Proof of Proposition~\ref{prop:mix_time}}
\label{app:B}

To prove the statement, we use a Markovian coupling
procedure \cite{levin2017markov}. More precisely, given two current states
$Y(t)=y$ and $Y'(t)=z$, we construct a joint
transition kernel
\begin{align*}
&\widetilde W_{(y',z')\leftarrow(y,z)}
\\
&=
\mathbb{P}\big(
Y(t+1)=y',Y'(t+1)=z'
\mid Y(t)=y,Y'(t)=z
\big),
\end{align*}
whose marginals coincide with the original
transition kernel $W$. 
The coupling is constructed as follows.

For each pair $(v,i)\in\mathcal{V}\times\mathcal{C}$,
let $U_v^{(i)}(t)\sim\mathrm{Unif}(0,1)$.
If $U_v^{(i)}(t)\leq\gamma_v$, then a Bernoulli
random variable with parameter $q^{(i)}$
is generated, i.e.,
$B_v^{(i)}(t)\sim\mathrm{Ber}(q^{(i)})$,
and we set
$
Y_v^{(i)}(t+1)
=
Y_v'^{(i)}(t+1)
=
B_v^{(i)}.
$
On the other hand, if $U_v^{(i)}(t)>\gamma_v$, then,
for $\gamma_v<1$, we define
\begin{align*}
r_v^{(i)}(y)
&=
\frac{
\alpha_v\sum_{w\in\mathcal{V}}P_{vw}y_w^{(i)}
+\beta_v\Pi^{(i)}(y)}
{1-\gamma_v},\\
r_v^{(i)}(z)
&=
\frac{
\alpha_v\sum_{w\in\mathcal{V}}P_{vw}z_w^{(i)}
+\beta_v\Pi^{(i)}(z)}
{1-\gamma_v}.
\end{align*}
Notice that
$r_v^{(i)}(y),r_v^{(i)}(z)\in[0,1]$,
since $\alpha_v+\beta_v=1-\gamma_v$.
%
Generate $\Xi_v^{(i)}(t)\sim\mathrm{Unif}(0,1)$,
independently across pairs and independently
of the other random variables, and set
\begin{align*}
Y_v^{(i)}(t+1)
&=
\mathbf{1}_{\{
\Xi_v^{(i)}\leq r_v^{(i)}(y)
\}},\ Y_v'^{(i)}(t+1)
&=
\mathbf{1}_{\{
\Xi_v^{(i)}\leq r_v^{(i)}(z)
\}}.
\end{align*}
Then, we have
$\mathbb{P}\big(
Y_v^{(i)}(t+1)=1
\mid Y(t)=y,Y'(t)=z
\big)
=
\gamma_v q^{(i)}
+(1-\gamma_v)r_v^{(i)}(y)
=\theta_v^{(i)}(y),
$
where $\theta_v^{(i)}(y)$ is the transition
probability defined in Eq.~(4).
Similarly,
$\mathbb{P}\big(
Y_v'^{(i)}(t+1)=1
\mid Y(t)=y,Y'(t)=z
\big)=\theta_v^{(i)}(z),
$
from which we can conclude that the joint kernel defines a valid Markovian coupling
of the two chains.

Observe that, if
$U_v^{(i)}(t)\leq\gamma_v$ for every
$(v,i)\in\mathcal{V}\times\mathcal{C}$,
the two chains coincide at time $t+1$.
Consequently,
\begin{align*}
&\mathbb{P}\big(
Y(t+1)=Y'(t+1)
\mid Y(t)=y,Y'(t)=z
\big)
\geq
(\gamma_{\min})^{nm}.
\end{align*}
Using the coupling inequality,
\begin{align*}
&\big\|W(y,\cdot)-W(z,\cdot)\big\|_{\mathrm{TV}}\\
&\quad\leq
1-\mathbb{P}\big(
Y(t+1)=Y'(t+1)
\mid Y(t)=y,Y'(t)=z
\big)\\
&\quad\leq1-(\gamma_{\min})^{nm}.
\end{align*}
This leads to the following bound on the
Dobrushin ergodic coefficient:
\[
\delta(W):=
\sup_{y,z\in\{0,1\}^{\mathcal{V}\times\mathcal{C}}}
\big\|W(y,\cdot)-W(z,\cdot)\big\|_{\mathrm{TV}}
\leq1-(\gamma_{\min})^{nm}.
\]
By iterating the contraction inequality
and using the invariance of $\nu$, we obtain
\begin{align*}
&\max_{x\in\{0,1\}^{\mathcal{V}\times\mathcal{C}}}
\big\|W^t(x,\cdot)-\nu\big\|_{\mathrm{TV}}\leq\delta(W)^t
\leq
\mathrm{e}^{-(\gamma_{\min})^{nm}t,}
\end{align*}
By imposing
$
\exp\big(-(\gamma_{\min})^{nm}t\big)
\leq\varepsilon,
$
we obtain
$
t\geq
\frac{1}{(\gamma_{\min})^{nm}}
\log({1}/{\varepsilon}).
$
Since $t$ is integer-valued, it follows that
$
\tau_{\mathrm{mix}}(\varepsilon)
\leq
\left\lceil
\frac{1}{(\gamma_{\min})^{nm}}
\log\frac{1}{\varepsilon}
\right\rceil,
$
which concludes the proof.

\subsection{Proof of Theorem~\ref{thm:exit}}
We begin with the following preliminary result.
\begin{lemma}\label{lemma:E}
Let $\mathcal F_t$ be the $\sigma$-algebra generated by the history of $Y$ up to time $t$. Fix $i\in\mathcal C$ and define the following events
\begin{gather*}
A_i:=\Big\{\sum_{v\in\mathcal V}Y_v^{(i)}(t+1)<n \Big\},\\
B_i:=\bigcup_{j\neq i}\bigcup_{v\in\mathcal V}\{Y_v^{(j)}(t+1)=1\},\quad E_i:=A_i\cup B_i.
\end{gather*}
Then,
$$\mathds{1}_{\{Y(t)=z^{\star,i}\}}
\mathbb{P}\left(Y(t+1)\neq z^{\star,i}\mid \mathcal{F}_t\right)
=
\mathds{1}_{\{Y(t)=z^{\star,i}\}}
\mathbb{P}\left(E_i\mid \mathcal{F}_t\right).
$$\end{lemma}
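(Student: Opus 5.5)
The argument is a set identity: on the event $\{Y(t)=z^{\star,i}\}$, the event $\{Y(t+1)\neq z^{\star,i}\}$ coincides with $E_i$. Since both indicators are $\mathcal F_{t+1}$-measurable, conditional expectation then gives the claim. Concretely, I will show
\[
\{Y(t+1)\neq z^{\star,i}\}=E_i
\]
as events. This holds regardless of $Y(t)$, since it only concerns the configuration at time $t+1$. Multiplying by the $\mathcal F_t$-measurable indicator $\mathds{1}_{\{Y(t)=z^{\star,i}\}}$ and pulling it inside the conditional expectation then gives the displayed equality.

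To prove the set identity, I use the definition of $z^{\star,i}$: $Y(t+1)=z^{\star,i}$ if and only if $Y_v^{(i)}(t+1)=1$ for all $v\in\mathcal V$ and $Y_v^{(j)}(t+1)=0$ for all $v\in\mathcal V$ and $j\neq i$. Because the entries are $\{0,1\}$-valued, the first condition is equivalent to $\sum_{v}Y_v^{(i)}(t+1)=n$, which is $A_i^c$. The second condition is exactly $B_i^c$. Hence $\{Y(t+1)=z^{\star,i}\}=A_i^c\cap B_i^c$. By De Morgan, $\{Y(t+1)\neq z^{\star,i}\}=A_i\cup B_i=E_i$.

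Finally,
\[
\mathds{1}_{\{Y(t)=z^{\star,i}\}}\mathbb P(Y(t+1)\neq z^{\star,i}\mid\mathcal F_t)=\mathbb E[\mathds{1}_{\{Y(t)=z^{\star,i}\}}\mathds{1}_{E_i}\mid\mathcal F_t]=\mathds{1}_{\{Y(t)=z^{\star,i}\}}\mathbb P(E_i\mid\mathcal F_t),
\]
which is the statement. There is no real obstacle here. The only care needed is to note that the sum condition encodes "all ones" precisely because the entries are binary.

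Looking ahead to Theorem~\ref{thm:exit}, the lemma will be used together with the Markov property and the conditional independence of the $Y_v^{(j)}(t+1)$ given $Y(t)=z^{\star,i}$. There, $\theta_v^{(i)}=\alpha_v+\beta_v+\gamma_vq^{(i)}=1-\gamma_v(1-q^{(i)})$ and $\theta_v^{(j)}=\gamma_vq^{(j)}$, so union bounds on $A_i$ and $B_i$ give the upper constant, and $\mathbb P(A_i)\ge 1-\prod_v(1-\gamma_v(1-q^{(i)}))$ gives the lower one.
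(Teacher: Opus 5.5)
Your proof is correct and follows essentially the same route as the paper: both identify $\{Y(t+1)\neq z^{\star,i}\}$ with $E_i=A_i\cup B_i$ directly from the definition of $z^{\star,i}$, and then take conditional probabilities given $\mathcal F_t$. You additionally spell out the De Morgan step and note that the binary entries make $A_i^c$ mean ``all ones''; you also observe, correctly, that the set identity holds irrespective of $Y(t)$, so the indicator on $\{Y(t)=z^{\star,i}\}$ only matters when the lemma is applied in Theorem~\ref{thm:exit}.
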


\begin{proof}
By the definition of monopoly state, $
Y(t+1)=z^{\star,i}
$
if and only if all users keep positive engagement on channel $i$ and no user positively engages with any other channel. Equivalently, we write 
$
Y(t+1)\neq z^{\star,i}
$
if and only if either at least one user has $Y_v^{(i)}(t+1)=0$, or there exists $j\neq i$ and $v\in\mathcal V$ such that $Y_v^{(j)}(t+1)=1$. This is precisely the event $E_i$.
Taking conditional probabilities given $\mathcal F_t$ yields the claim.
\end{proof}

\smallskip
Fix $i\in\mathcal C$ and assume that $Y(t)=z^{\star,i}$. Then, by \eqref{eq: stoch_model_simpl}, for every $v\in\mathcal V$ we have
\begin{align*}
X_v^{(i)}(t+1)
&=
\alpha_v\sum_{w\in\mathcal V}P_{vw}\cdot 1+\beta_v\cdot 1+\gamma_v q^{(i)}\\
&=\alpha_v+\beta_v+\gamma_v q^{(i)}
=1-\gamma_v(1-q^{(i)}),
\end{align*}
whereas for every $j\neq i$,
$
X_v^{(j)}(t+1)=\gamma_v q^{(j)}.
$

Let $\mathcal F_t$ be the $\sigma$-algebra generated by the history up to time $t$.
By Lemma~\ref{lemma:E}
\begin{align*}
&\mathbb P\left(Y(t+1)\neq z^{\star,i}\mid \!\mathcal F_t\right)\mathds{1}_{\{Y(t)=z^{\star,i}\}}
\!=
\mathbb P(E_i\mid\mathcal \!F_t)\mathds{1}_{\{Y(t)=z^{\star,i}\}}.
\end{align*}

We first derive a lower bound. Since
$
A_i\subseteq E_i,
$
we get
$
\mathbb P(E_i\mid\mathcal F_t)\geq \mathbb P(A_i\mid\mathcal F_t).
$
Moreover, conditioned on $Y(t)=z^{\star,i}$, the variables $\{Y_v^{(i)}(t+1)\}_{v\in\mathcal V}$ are independent Bernoulli random variables with success probabilities $X_v^{(i)}(t+1)=1-\gamma_v(1-q^{(i)})$. Hence
\begin{align*}
&\mathbb P(A_i\mid\mathcal F_t)
=
1-\prod_{v\in\mathcal V}X_v^{(i)}(t+1)\\
&=
1-\prod_{v\in\mathcal V}\left(1-\gamma_v(1-q^{(i)})\right)\ge
1-\left(1-\gamma_{\min}(1-q^{(i)})\right)^{n }.
\end{align*}
If $\gamma_{\min}$ is sufficiently small so that
$
n \gamma_{\min}(1-q^{(i)})<1,
$
then
\begin{align*}
\mathbb P(A_i\mid\mathcal F_t)
&\ge
1-e^{-n \gamma_{\min}(1-q^{(i)})}\ge
\frac12\,n \gamma_{\min}(1-q^{(i)}).
\end{align*}
Thus
$
\mathbb P\left(Y(t+1)\neq z^{\star,i}\mid Y(t)=z^{\star,i}\right)
\geq c_1^{(i)}\gamma_{\min},
$
with
$
c_1^{(i)}=\frac12\,n (1-q^{(i)}).
$

We now derive an upper bound. By the union bound and Lemma~\ref{lemma:E},
$
\mathbb P(E_i\mid\mathcal F_t)
\le
\mathbb P(A_i\mid\mathcal F_t)+\mathbb P(B_i\mid\mathcal F_t).
$
For the first term, again by the union bound,
\begin{align*}
\mathbb P(A_i\mid\mathcal F_t)
&\le
\sum_{v\in\mathcal V}\mathbb P\left(Y_v^{(i)}(t+1)=0\mid\mathcal F_t\right)\\
\quad&=
\sum_{v\in\mathcal V}\left(1-X_v^{(i)}(t+1)\right)=
\sum_{v\in\mathcal V}\gamma_v(1-q^{(i)})\\
&\le
n \,\gamma_{\max}(1-q^{(i)}).
\end{align*}
The second term can be estimated as follows
\begin{align*}
\mathbb P(B_i\mid\mathcal F_t)
&\le
\sum_{j\neq i}\sum_{v\in\mathcal V}\mathbb P\left(Y_v^{(j)}(t+1)=1\mid\mathcal F_t\right)\\
&=
\sum_{j\neq i}\sum_{v\in\mathcal V}X_v^{(j)}(t+1)=
\sum_{j\neq i}\sum_{v\in\mathcal V}\gamma_v q^{(j)}\\
&\le
(m  -1)n \gamma_{\max}q_{\max}.
\end{align*}
Combining the two bounds yields
\begin{align*}
&\mathbb P\left(Y(t+1)\neq z^{\star,i}\mid Y(t)=z^{\star,i}\right)\\
&\quad\le
n \gamma_{\max}(1-q^{(i)})+
(m  -1)n \gamma_{\max}q_{\max}\\
&\quad=
c_2^{(i)}\gamma_{\max},
\end{align*}
with
$
c_2^{(i)}
=
n \big((1-q^{(i)})+(m-1)q_{\max}\big),
$ concluding the proof.

\subsection{Proof of Corollary~\ref{corol:X}}
From Theorem~\ref{thm:ergodicity}, the Markov chain $\{Y(t)\}$ converges in distribution to $Y_\infty \sim \nu$. 
Since $X(t)=f_X(Y(t))$ and $\Pi(t)=f_\Pi(Y(t))$ are continuous and bounded deterministic functions of $Y(t)$, by the continuous mapping theorem (see \cite[Theorem~2.3]{Billingsley1999}) we have that $\{X(t)\}_{t\in\integernonnegative}$ and $\{\Pi(t) \}_{t\in\integernonnegative}$ converge in distribution to $X_\infty=f_X(Y_\infty)$ and $\Pi_\infty=f_\Pi(Y_\infty)$, respectively, thus proving the first statement. 
Since $\{Y(t)\}$ is ergodic, applying \eqref{eq:ergodicity} with $f=f_X$ and $f=f_\Pi$ yields the second statement.

\subsection{Proof of Theorem~\ref{thm:metastability}} 
Let $
\beta_{\min}:=\min_{v\in\mathcal V}\beta_v>0$,
$\gamma_{\max}:=\max_{v\in\mathcal V}\gamma_v$,
 $q_{\max}:=\max_{i\in\mathcal C}q^{(i)}$.
Also, let
$
p_t:=\mathbb P\left(Y(t)\notin\mathcal{Z}^{\star}\right),
$
and let $\mathcal F_t$ be the $\sigma$-algebra generated by the history up to time $t$.
By conditioning on the complementary events $\{Y(t)\in\mathcal{Z}^{\star}\}$ and $\{Y(t)\notin\mathcal{Z}^{\star}\}$, we obtain
\begin{align}
p_{t+1}
&=
\mathbb P\left(Y(t+1)\notin\mathcal{Z}^{\star}\right)\nonumber\\
&=
\mathbb E\!\left[
\mathbf 1_{\{Y(t)\in\mathcal{Z}^{\star}\}}
\mathbb P\left(Y(t+1)\notin\mathcal{Z}^{\star}\mid\mathcal F_t\right)
\right]\nonumber\\
&\quad+
\mathbb E\!\left[
\mathbf 1_{\{Y(t)\notin\mathcal{Z}^{\star}\}}
\mathbb P\left(Y(t+1)\notin\mathcal{Z}^{\star}\mid\mathcal F_t\right)
\right].
\label{eq:split-U-new}
\end{align}
If $Y(t)=z^{\star,i}$ for some $i\in\mathcal C$, then Theorem~\ref{thm:exit} gives
\begin{align*}
&\mathbb P\left(Y(t+1)\notin\mathcal{Z}^{\star}\mid Y(t)=z^{\star,i}\right)
\\&\quad\leq
\mathbb P\left(Y(t+1)\neq z^{\star,i}\mid Y(t)=z^{\star,i}\right)
\leq c_2\gamma_{\max},
\end{align*}
where the constant can be taken uniformly over channels as
$
c_2=n \left(1+(m  -1)q_{\max}\right).
$
Hence
\begin{align}
&\mathbb E\!\left[
\mathbf 1_{\{Y(t)\in\mathcal{Z}^{\star}\}}
\mathbb P\left(Y(t+1)\notin\mathcal{Z}^{\star}\mid\mathcal F_t\right)
\right]\nonumber\\
&\qquad\le
c_2\gamma_{\max}\,\mathbb P\left(Y(t)\in\mathcal{Z}^{\star}\right)
=
c_2\gamma_{\max}(1-p_t).
\label{eq:bound-inside-U}
\end{align}

We now estimate the contribution of states outside $\mathcal{Z}^{\star}$.
Assume that $Y(t)\notin\mathcal{Z}^{\star}$, and let $i_t\in\mathcal C$ be an
$\mathcal F_t$-measurable index such that
$
\Pi^{(i_t)}(t):=\max_{i\in\mathcal C}\Pi^{(i)}(t).
$
Since $\sum_{i\in\mathcal C}\Pi^{(i)}(t)=1$, by the pigeonhole principle
$
\Pi^{(i_t)}(t)\geq {1}/{m  }.
$
Hence, for every $v\in\mathcal V$,
\begin{align*}
\mathbb P\left(Y_v^{(i_t)}(t+1)=1\mid\mathcal F_t\right)
&\geq
\beta_v \Pi^{(i_t)}(t)\geq
{\beta_{\min}}/{m  }.
\end{align*}
Moreover, for every $j\neq i_t$, since
$
\Pi^{(j)}(t)\leq 1-\Pi^{(i_t)}(t)\leq 1-{1}/{m  },
$
we have
$
1-\Pi^{(j)}(t)\geq {1}/{m  }.
$
Therefore, for every $v\in\mathcal V$ and $j\neq i_t$,
\begin{align*}
\mathbb P\left(Y_v^{(j)}(t+1)=0\mid\mathcal F_t\right)
&=
1-X_v^{(j)}(t+1)\\
&\ge
\beta_v\bigl(1-\Pi^{(j)}(t)\bigr)\ge
\frac{\beta_{\min}}{m  }.
\end{align*}

Now define the event
$
G_t:=\{Y(t+1)=z^{\star,i_t}\}.
$
Since $G_t\subseteq \{Y(t+1)\in\mathcal{Z}^{\star}\}$, we obtain
$
\mathbb P\left(Y(t+1)\in\mathcal{Z}^{\star}\mid\mathcal F_t\right)
\ge
\mathbb P(G_t\mid\mathcal F_t).
$
Conditioned on $\mathcal F_t$, the random variables
$\{Y_v^{(j)}(t+1)\}_{v\in\mathcal V,\ j\in\mathcal C}$
are independent Bernoulli. Hence, on the event
$\{Y(t)\notin\mathcal{Z}^{\star}\}$,
{\small{\begin{align*}
&\mathbb P(G_t\mid\mathcal F_t)
\\&=
\prod_{v\in\mathcal V}
\mathbb P\left(Y_v^{(i_t)}(t+1)=1\mid\mathcal F_t\right)
\prod_{j\neq i_t}\prod_{v\in\mathcal V}
\mathbb P\left(Y_v^{(j)}(t+1)=0\mid\mathcal F_t\right)\\
&\ge
\left(\frac{\beta_{\min}}{m  }\right)^{n }
\left(\frac{\beta_{\min}}{m  }\right)^{(m  -1)n }=
\left(\frac{\beta_{\min}}{m  }\right)^{m  n }=:b_0.
\end{align*}}}
Then,
$
\mathbb P\left(Y(t+1)\notin\mathcal{Z}^{\star}\mid\mathcal F_t\right)
\leq 1-b_0$
on
$\{Y(t)\notin\mathcal{Z}^{\star}\},
$
and
\begin{align}
&\mathbb E\!\left[
\mathbf 1_{\{Y(t)\notin\mathcal{Z}^{\star}\}}
\mathbb P\left(Y(t+1)\notin\mathcal{Z}^{\star}\mid\mathcal F_t\right)
\right]\le
(1-b_0)p_t.
\label{eq:bound-outside-U}
\end{align}

Combining \eqref{eq:split-U-new}, \eqref{eq:bound-inside-U}, and \eqref{eq:bound-outside-U}, we get
$$
p_{t+1}\leq c_2\gamma_{\max}(1-p_t)+(1-b_0)p_t
\le
(1-b_0)p_t+c_2\gamma_{\max}.
$$
Iterating this inequality yields
\begin{align*}
p_t
&\le
(1-b_0)^t p_0
+
c_2\gamma_{\max}\sum_{k=0}^{t-1}(1-b_0)^k\\
&\le
(1-b_0)^t+\frac{c_2\gamma_{\max}}{b_0}.
\end{align*}
This is equivalent to stating that 
 for all $t\in\mathbb Z_{\geq 0}$ and for any initial condition, it holds
\begin{equation}\label{eq:finite-time-U}
\mathbb P\left(Y(t)\in\mathcal{Z}^{\star}\right)
\ge
1-\frac{c_2}{b_0}\gamma_{\max}-(1-b_0)^t.
\end{equation}
Ergodicity ensures that, for any initial condition,
$
\frac{1}{T}\sum_{t=1}^{T}\mathbf 1_{\{Y(t)\in\mathcal{Z}^{\star}\}}$
convereges to
$\nu_\gamma(\mathcal{Z}^{\star}).
$ a.s.\ as $T\to\infty$.
By dominated convergence,
$$
\nu_\gamma(\mathcal{Z}^{\star})
=
\lim_{T\to\infty}
\frac{1}{T}\sum_{t=1}^{T}\mathbb P\left(Y(t)\in\mathcal{Z}^{\star}\right),
$$
and, using \eqref{eq:finite-time-U}, we obtain
\begin{align*}
\nu_\gamma(\mathcal{Z}^{\star})
&\ge
\liminf_{T\to\infty}
\frac{1}{T}\sum_{t=1}^{T}
\left(1-\frac{c_2}{b_0}\gamma_{\max}-(1-b_0)^t\right)\\
&=
1-\frac{c_2}{b_0}\gamma_{\max},
\end{align*} 
thereby concluding the main statement.

\end{document}